\setlist[itemize]{leftmargin=2em}
\definecolor{dkcyan}{rgb}{0.1, 0.3, 0.3}
\definecolor{dkgreen}{rgb}{0,0.3,0}
\definecolor{olive}{rgb}{0.5, 0.5, 0.0}
\definecolor{dkblue}{rgb}{0,0.1,0.5}
\definecolor{col:ln}{rgb}  {0.1, 0.1, 0.7}
\definecolor{col:str}{rgb} {0.8, 0.0, 0.0}
\definecolor{col:db}{rgb}  {0.9, 0.5, 0.0}
\definecolor{col:ours}{rgb}{0.0, 0.7, 0.0}
\definecolor{lightgreen}{RGB}{170, 255, 220}
\definecolor{darkbrown}{RGB}{121,37,0}
\colorlet{listing-comment}{gray}
\colorlet{operator-color}{darkbrown}
\lstdefinelanguage{custom-haskell}{
    language=Haskell,
    deletekeywords={lookup, delete, map, mapMaybe, Ord, Maybe, String, Just, Nothing, Int, Bool},
    keywordstyle=[2]\color{dkgreen},
    morekeywords=[2]{String, Map, Ord, Maybe, Int, Bool},
    morekeywords=[2]{Name, Expression, ESummary, PosTree, Structure, HashCode, VarMap},
    keywordstyle=[3]\color{dkcyan},
    literate=%
        {=}{{{\color{operator-color}=}}}1
        {|}{{{\color{operator-color}|}}}1
        {\\}{{{\color{operator-color}\textbackslash$\,\!$}}}1
        {.}{{{\color{operator-color}.}}}1
        {=>}{{{\color{operator-color}=>}}}1
        {->}{{{\color{operator-color}->}}}1
        {<-}{{{\color{operator-color}<-}}}1
        {::}{{{\color{operator-color}::}}}1
}
\lstdefinestyle{default}{
    basicstyle=\ttfamily\fontsize{8.7}{9.5}\selectfont,
    columns=fullflexible,
    commentstyle=\sffamily\color{black!50!white},
    escapechar=\#,
    framexleftmargin=1em,
    framexrightmargin=1ex,
    keepspaces=true,
    keywordstyle=\color{dkblue},
    mathescape,
    numbers=none,
    numberblanklines=false,
    numbersep=1.25em,
    numberstyle=\relscale{0.8}\color{gray}\ttfamily,
    showstringspaces=true,
    stepnumber=1,
    xleftmargin=1em
}
\small\lstset{language=custom-haskell,#1}}
\small\lstset{#1}}
\newcommand\bigforall{\mbox{\Large $\mathsurround0pt\forall$}}
\newcommand{\esummary}{e-summary\xspace}
\newcommand{\esummaries}{e-summaries\xspace}
\def\secref#1{\ref{sec:#1}}
\newtheorem{theorem}{Theorem}[section]
\newtheorem{lemma}[theorem]{Lemma}
\title{Hashing Modulo Alpha-Equivalence}
\author{Krzysztof Maziarz}
\affiliation{
\institution{Microsoft Research}
  \city{Cambridge}
  \country{UK}
}
\email{krmaziar@microsoft.com}
\author{Tom Ellis}
\affiliation{
\institution{Microsoft Research}
  \city{Cambridge}
  \country{UK}
}
\email{toelli@microsoft.com}
\author{Alan Lawrence}
\affiliation{
\institution{Microsoft Research}
  \city{Cambridge}
  \country{UK}
}
\email{allawr@microsoft.com}
\author{Andrew Fitzgibbon}
\affiliation{
\institution{Microsoft Research}
  \city{Cambridge}
  \country{UK}
}
\email{awf@microsoft.com}
\author{Simon Peyton Jones}
\affiliation{
\institution{Microsoft Research}
  \city{Cambridge}
  \country{UK}
}
\email{simonpj@microsoft.com}
\begin{document}

\begin{abstract}
In many applications one wants to identify identical subtrees of a program syntax tree.
This identification should ideally be robust to alpha-renaming of the program, but
no existing technique has been shown to achieve this with good efficiency (better than $\mathcal{O}(n^2)$ in expression size).
We present a new, asymptotically efficient way to hash modulo alpha-equivalence.
A key insight of our method is to use a weak (commutative) hash combiner at exactly one point in the construction, which admits an algorithm with $\mathcal{O}(n (\log n)^2)$ time complexity.
We prove that the use of the commutative combiner nevertheless yields a strong hash with low collision probability.
Numerical benchmarks attest to the asymptotic behaviour of the method.
\end{abstract}

\begin{CCSXML}
<ccs2012>
   <concept>
       <concept_id>10003752.10003809</concept_id>
       <concept_desc>Theory of computation~Design and analysis of algorithms</concept_desc>
       <concept_significance>500</concept_significance>
       </concept>
   <concept>
       <concept_id>10011007</concept_id>
       <concept_desc>Software and its engineering</concept_desc>
       <concept_significance>300</concept_significance>
       </concept>
 </ccs2012>
\end{CCSXML}

\ccsdesc[500]{Theory of computation~Design and analysis of algorithms}
\ccsdesc[300]{Software and its engineering}

\keywords{hashing, abstract syntax tree, equivalence}

\maketitle

\section{Introduction}
This paper addresses the problem of hashing abstract syntax trees while respecting alpha equivalence. This is a generic problem, with applications in many areas of programming language implementation, for example common subexpression elimination (CSE), hashing for structure sharing, or as part of pre-processing for machine learning.

Taking CSE as an example, consider the following program fragment
\begin{code}
(a + (v+7)) * (v+7)
\end{code}
A standard CSE transformation can rewrite this to
\begin{code}
let w = v+7 in (a + w) * w
\end{code}
which can be computed more efficiently.  However, CSE is not entirely straightforward. Consider
\begin{code}
(a + (let x = exp(z) in x+7)) *
     (let y = exp(z) in y+7)
\end{code}
We might hope that CSE would spot that the two let-bound terms are $\alpha$-equivalent, and transform to 
\begin{code}
let w = (let x = exp(z) in x+7) in (a + w) * w
\end{code}
We would like to similarly spot the equivalence of the two lambda terms in
\begin{code}
foo (\x.x+7) (\y.y+7)
\end{code}
and transform to 
\begin{code}
let h = \x.x+7 in foo h h
\end{code}
So, we want to find {\em all pairs}, or more precisely all equivalence classes, of {\em $\alpha$-equivalent subexpressions} of a given program.  Since the program may be large, we would like to generate the $\alpha$-equivalence classes in reasonable time. If there were a hash function invariant under $\alpha$-renaming that could be computed for every node in a single pass over the tree, the equivalence classes could be generated in the cost of a single sort.
Somewhat surprisingly, the CSE literature barely mentions the challenge of hashing modulo $\alpha$-equivalence, nor does the wider literature on hashing of program fragments.  (One might wonder whether switching to de Bruijn indexing would solve the problem,
but it does not, as we show in Section~\secref{deBruijn}.)

Another challenge is that in typical compilers the program is subjected to thousands of rewrites, each of which transforms the program locally.  
Ideally, we would like an \emph{incremental} hashing algorithm, so that we can continuously monitor sharing, for example for register pressure sensitive optimization algorithms.

In this paper we address these challenges, making the following contributions:
\begin{itemize}
    \item We present an algorithm that identifies all equivalence classes of subexpressions of an expression, respecting $\alpha$-equivalence (Section~\secref{key-ideas}).
    The algorithm is developed in two steps.  The first defines an \emph{e-summary} at each node; this step is invertible, allowing the original expression (modulo~$\alpha$) to be reconstructed  (Section~\secref{step1}).
    In the second step we develop
    a more efficient representation for \esummaries, optimized for the task of producing a hash code for the \esummary (Section~\secref{step2}).  This two-step approach makes the correctness argument easy (Section~\secref{outline}).
    \item We show that the algorithm runs in sub-quadratic time (Section~\secref{complexity}), and is compositional, so that it can readily be made incremental (Section~\secref{incremental}).
    \item A key step in making the algorithm efficient is to use a weak hash combiner (exclusive-or) when computing the hash of a finite map (Section~\secref{fast-hash-variable-map}).  
    At first glance, that weak combiner threatens the good properties of hashing.
    However, we compute
    the theoretical collision probability for our hash function, showing it can be upper-bounded, with the bound decreasing exponentially with the size of the hashing space (Section~\secref{proof-hash-is-strong}).
    \item 
    Our proof also lays down several lemmas about compositional hashing functions, which we believe will prove useful for analyses beyond the one done in this work.
\end{itemize}
We empirically evaluate the asymptotic behaviour of our approach in Section~\secref{empirical}, and discuss related work in Section~\secref{related}.

A Haskell implementation of our hashing algorithm together with all the benchmarks and baselines can be found at \url{https://github.com/microsoft/hash-modulo-alpha}.

\section{The Problem We Address}
\label{sec:problem}

Many algorithms were designed to analyse or transform programs.
These applications range from classical tools such as compilers and static analysis methods, to understanding and generating code using deep learning~\cite{balog2017deepcoder,allamanis2017learning,brockschmidt2018generative}.  The code being analysed or transformed is generally represented by an Abstract Syntax Tree (AST), which represents computational expressions using a tree structure.  Subtrees of such an AST --- referred to as \emph{subexpressions} --- are useful, because they often correspond to semantically meaningful parts of the program, such as functions. 

Many applications need to quickly identify all equivalent subexpressions in an AST.  Examples include \emph{common subexpression elimination} (CSE), as mentioned above; \emph{structure sharing} to save memory, by representing all occurrences of the same subexpression by a pointer to a single shared tree; or \emph{pre-processing for machine learning}, where subexpression equivalence can be used as an additional feature, for example by turning an AST into a graph with equality links.

\subsection{What Does ``Equivalent'' Mean?}
Downstream tasks may differ in what subexpressions they consider ``equivalent''. For example, here are four candidates:
\begin{itemize}
    \item \emph{Syntactic equivalence} means that two subexpressions are equivalent if they are identical trees; the same shape, with the same nodes, and the same variable names.
    \item \emph{$\alpha$-equivalence} is like syntactic equivalence but is insensitive to renaming of bound variables. For example, 
    the expression \lstinline|(\x.x+y)| is equivalent to \lstinline|(\p.p+y)| (by $\alpha$-renaming the lambda-bound variable), but not equivalent to \lstinline|(\q.q+z)|, because the free variables differ.
    \item \emph{Graph equivalence} goes beyond $\alpha$-equivalence by treating a \lstinline|let| expression as a mere textual description of a graph.  So \lstinline|(let x=e1 in let y=e2 in x+y)| is equivalent to \lstinline|(let y=e2 in let x=e1 in x+y)|, and to \lstinline|(e1+e2)|, because all three describe the same underlying graph.
    \item \emph{Semantic equivalence} says that two subexpressions are equivalent if they evaluate to the same value, regardless of the values of their free variables.  For example \lstinline|(3+x+4)| is equivalent to \lstinline|(x+7)| and \lstinline|(7+x)| among many others.
\end{itemize}
The difficulty of deciding equivalence ranges from trivial (syntactic equivalence) to undecidable (semantic equivalence).
In this paper \emph{we focus on $\alpha$-equivalence}.
We specifically do not want to go as far as graph equivalence, because let-expressions express operational choices about object lifetimes and evaluation order, and so graph equivalence is too strong for the downstream tasks that we are interested in. Of course, graph equivalence might be just right for other applications, and it would be interesting to adapt the ideas presented here, but we leave exploring that to future work.

\subsection{Baseline: Purely Syntactic Equivalence} \label{sec:syntactic-equivalence}

Purely syntactic equivalence is easy, and perfect for structure sharing, but not for much else.  For example, it is inadequate for CSE, and other tasks, via two primary failure modes: false negatives and false positives.

\begin{itemize}
\item \emph{False negatives: sensitivity to arbitrary variable names}.
Consider this expression:
\begin{code}
map (\y.y+1) (map (\x.x+1) vs)
\end{code}
The two lambda-expressions are not syntactically identical, but they are $\alpha$-equivalent, 
and perform the same computation in the same way. Similarly, consider
\begin{code}
foo (let bar = x+1 in bar*y)
    (let pub = x+1 in pub*y)
\end{code}
Here we would like to CSE the two arguments to \lstinline{foo}, even though they use different binders internally.

    \item \emph{False positives: name overloading}.
Consider the syntactically repeated subexpression \lstinline{x+2} in this example:
\def\subex#1{\underline{#1}}
\begin{code}
foo (let x=bar in #\subex{x+2}#) (let x=pub in #\subex{x+2}#)
\end{code}
The two subexpressions \lstinline{x+2} are unrelated, but they are syntactically identical. If the goal is structure sharing this is fine; indeed we might \emph{want} to share the two \lstinline{x+2} subexpressions, to save memory.  However, sharing the two would be wrong for tasks similar to CSE. For example, it would be clearly wrong to transform the above expression into
\begin{code}
let tmp = x+2
 in foo (let x=bar in tmp) (let x=pub in tmp)
\end{code}
\end{itemize}
\noindent
The second problem can readily be addressed, by preprocessing the expression so that every binding site binds a distinct variable name.  This step takes time linear in the expression size $n$; or, more precisely, if we take account of the $\mathcal{O}(\log n)$ time to look up a bound variable in the environment,  $\mathcal{O}(n \log n)$. We assume this preprocessing in all algorithms below.
\label{sec:unique-names}

\subsection{Hashing For Syntactic Equivalence} \label{sec:syntactic-equiv}

A standard approach to determining subexpression equivalence is to use some form of \emph{hashing}. We compute a fixed-size \emph{hash code} for each node in the tree, and use these hash codes to insert every node into a \emph{hash table}.

Hashing gives a simple, direct, and compositional implementation for syntactic equivalence: the hash for a node is computed by hashing the node constructor with the hashes of the children.  When used for structure sharing (to save memory), this is often called \emph{hash-consing}. When you are just about to allocate a new node, first compute its hash code, and then look that up in the hash table to see if that node already exists. If so, use it; if not, allocate it and add it to the hash table.  In effect, we simply memoise the node constructor functions.

But this simple approach fails for $\alpha$-equivalence, because \lstinline|(\x.x+1)| and \lstinline|(\y.y+1)| have different hash codes.  How can we fix up the hashing approach to account for $\alpha$-equivalence?

\subsection{De Bruijn Indexing}
\label{sec:deBruijn}
One well-known way to become insensitive to $\alpha$-renaming is to use a nameless representation, typically \emph{de Bruijn indexing}.  Lambdas have no binder; and each occurrence of a bound variable is replaced by a number that counts how many intervening lambdas separate that occurrence from its binding lambda.  For example, the expression \lstinline|(\x.\y.x+y*7)| is represented in de Bruijn form by \verb|(\.\.%1+%0*7)|, where we use \verb|%|$i$ to represent a variable occurrence with de Bruijn index~$i$.

After switching to de Bruijn indexing, can we use vanilla hashing to determine equivalence,
and thus solve the $\alpha$-equivalence challenge? Sadly, no: using de Bruijn indexing remains vulnerable to both false positives and false negatives:
\begin{itemize}
    \item \emph{False negatives}.  Consider
\begin{code}
\t. foo (\x.x+t) (\y.\x.x+t)
\end{code}
The two subexpressions \lstinline{(\x.x+t)} are certainly equivalent, and we could profitably transform the expression to
\begin{code}
\t. let h = \x.x+t in foo h (\y.h)
\end{code}
But with de Bruijn indexing, the two expressions look different:
\begin{code}
\. foo (\.
\end{code}
Notice how the two occurrences of \lstinline{t} have become \verb|%1| and \verb|%2| respectively.

\item \emph{False positives}.  Consider
\begin{code}
\t. foo (\x.t*(x+1)) (\y.\x.y*(x+1))
\end{code}
With de Bruijn indexing this looks like
\begin{code}
\. foo (\.
\end{code}
This has two occurrences of \verb|(\.%1*(%0+1))|, which might be great for structure sharing, but is wrong for CSE.

Note that, unlike with simple syntactic equivalence,
these false positives cannot be eliminated by giving every binder a unique name -- with
de Bruijn there \emph{are} no binders!
\end{itemize}
\noindent
Moreover, using de Bruijn indexing as the internal representation of an expression in a compiler
incurs serious costs of its own, because terms need to be repeatedly traversed as they are substituted under other binders, to
adjust their de Bruijn indices.
We know of no systematic, quantitative comparison of the engineering tradeoff between de Bruijn and named representations in a substantial application (e.g. a compiler), perhaps because the choice has such pervasive effects that implementors are typically forced to make one choice or the other, and stick to it. A decent attempt was made in \cite{murphy:tilt}, with the conclusion that the costs of a de Bruijn representation exceed the benefits.

\subsection{Locally Nameless} \label{sec:locally-nameless}

The false negatives and false positives of de Bruijn indexing are a serious problem for CSE-like purposes.
However, they can be avoided by using the ``locally nameless'' representation~\cite{mcbride04,weirich,chargueraud12}.
The idea is simple: the hash of an expression is defined to be the hash of
the de-Bruijn-ised representation of \emph{the subexpression taken in isolation}.  For example, the hash of \lstinline{(f x (\y.x+y))} would be the hash of
\verb|(f x (\.x+%0))|.  In this expression the free variables \lstinline{f} and \lstinline{x} remain unchanged, but the locally-bound variable \lstinline{y} has been de-Bruijn-ised.  The expression to be hashed has a mixture of de Bruijn indices and named free variables.

The hash of an application \lstinline{(e1 e2)} can be obtained by combining the hash of \lstinline{e1} and \lstinline{e2};
but the hash of \lstinline{(\x.e)} \emph{cannot} be obtained from the hash of \lstinline{e}.
The hash of \lstinline{e} incorporates the hash of each occurrence of \lstinline{x} in \lstinline{e}; but the hash
of \lstinline{(\x.e)} must instead incorporate the hash of an appropriate de Bruijn index at each of those occurrences.
We cannot do this compositionally; instead, we must first de-Bruijn-ise \lstinline{x} in \lstinline{e}, and then take
the hash of \emph{that}.

This algorithm
correctly does hashing modulo $\alpha$-equi\-val\-ence, but it comes with a cost in asymptotic complexity: as we pass each lambda, we must re-hash the entire body.

In practice, the locally-nameless scheme works sufficiently well that it is used in Epigram \cite{mcbride04} and the LEAN theorem prover \cite{lean}.
However, it suffers from the same complexity issues as de Bruijn.   Other things being equal, we would prefer to avoid compiler technology that has asymptotic complexity holes fundamentally built in, as well as the other index-shuffling costs imposed by a de-Bruijn-based (including locally nameless) representation \cite{murphy:tilt}.  Our contribution is to show how to use a name-ful representation (avoiding index shuffling), and still
get compositional hashing with asymptotically-good complexity.

\section{The Key Ideas} \label{sec:goal}\label{sec:key-ideas}

\begin{figure*}[t]
\includegraphics[width=\linewidth]{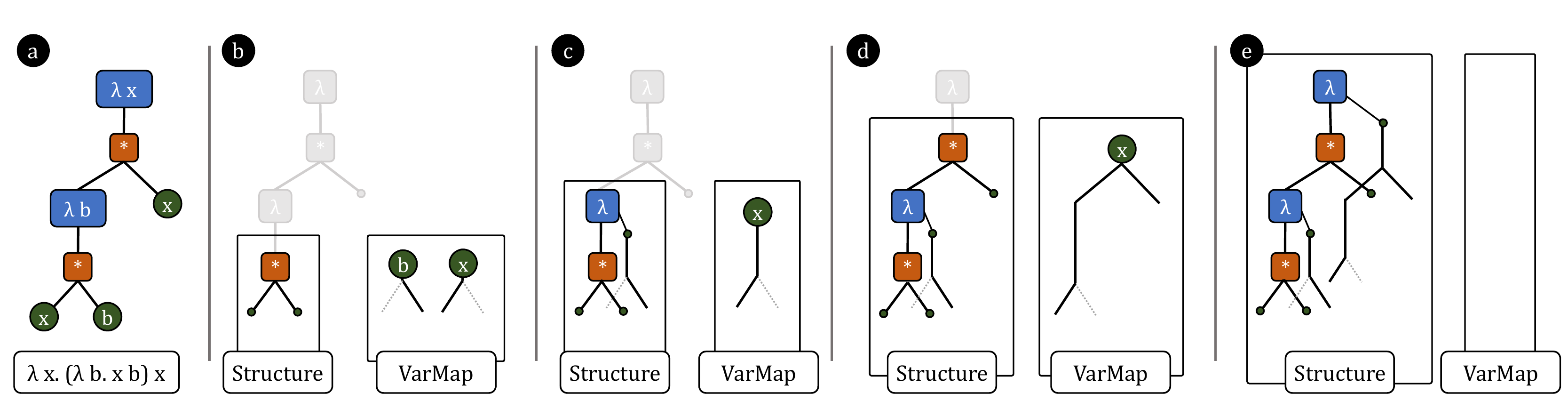}
\caption{ (a) Input expression, with names at \lstinline{Lam} and \lstinline{Var} nodes. (b-e) E-Summaries for subexpressions, names only in the VarMap. This diagram depicts an $\mathcal{O}(n^2)$ algorithm (Section~\secref{full-algo-quadratic}); then the ``smaller subtree'' (Section~\secref{smaller-tree}) and ``xor'' (Section~\secref{xor}) modifications reduce complexity for hash computation to $\mathcal{O}(n (\log n)^2)$. }
\label{fig:esummaries}
\end{figure*}

In this section we describe the key ideas of our approach.  Before doing so, it is helpful to articulate our goal more precisely.

\noindent
{\bf Goal.}  Given an expression $e$, in which every binding site binds a distinct variable name, identify all equivalence classes of subexpressions of $e$, where two subexpressions are equivalent if and only if they are $\alpha$-equivalent. We wish to achieve this goal in a way that is:

\begin{itemize}
    \item {\bf Compositional}. If we have already done the computation for \lstinline{e1} and \lstinline{e2}, computing the result for \lstinline{(e1 e2)} should be done by combining the results from children. In particular, context-dependent computation is not allowed.
    \item {\bf Efficient.}  We consider this to mean that finding all equivalent subexpressions should be sub-quadratic in the size of the expression.
\end{itemize}
\noindent
Compositionality helps with efficiency, because combining two smaller expressions \lstinline{e1} and \lstinline{e2} into a bigger one involves only combining the results of processing these subexpressions.  But, crucially for our applications, compositionality also allows hashing to be \emph{incremental}: if we have already performed the equality-discovery task for a large expression, and we make a small rewrite in that expression, we can efficiently recompute the results by examining only parts of the expression that have changed.  We give an analysis of incrementality in Section~\secref{incremental}.

As to complexity, we consider an algorithm that is quadratic in expression size to be too expensive. Linear (constant work at each node) would be ideal; in this paper we achieve log-linear (generally $\mathcal{O}(n(\log n)^k)$, with here $k=2$), which we consider acceptable.

\subsection{The Challenge of Compositionality} \label{sec:compositional-challenge}

Given an application \lstinline{(e1 e2)}, a compositional algorithm will somehow process \lstinline{e1} and \lstinline{e2} separately, and combine those results to get the result for \lstinline{(e1 e2)}.  Let us call ``the result of processing an expression'' the \emph{\esummary} for that expression.  You could imagine attaching the \esummary for \lstinline{e} to the tree node for \lstinline{e}, and computing the \esummary for \lstinline{(e1 e2)} from the e-summaries for \lstinline{e1} and \lstinline{e2}.  The idea is that an \esummary precisely identifies the equivalence class: two subexpressions are $\alpha$-equivalent iff their \esummaries are equal.

The difficulty with this approach is that the expressions \lstinline{(x+2)} and \lstinline{(y+2)} are different, and must have \emph{different} \esummaries, but \lstinline{(\x.x+2)} and \lstinline{(\y.y+2)} are $\alpha$-equivalent, and must have the \emph{same} \esummary.  So an \esummary cannot be a simple numeric hash code, because there is no way to get the hash-code for, say, \verb|%0+2| from the hash-code of \verb|x+2|.

Therefore, we need a richer \esummary.  At one extreme, an expression could of course be its own \esummary! That would be fast to compute (a no-op), and compositional, but asking if two e-summaries are equal would require an $\alpha$-respecting equality comparison between the two summaries. The task of finding all equivalence classes would be absurdly expensive, requiring an $\alpha$-respecting equality comparison between every pair of expressions.

We seek something in between the two. We will define an \esummary that is not fixed-size (like a hash code), but from which we can rapidly compute a hash code.

\subsection{Overview of Our Approach} \label{sec:outline}

A general overview of our approach is as follows. 

\paragraph{Step 1 (Section~\secref{step1}).} 
We define a particular \emph{\esummary}, with the following properties:
\begin{itemize}
    \item The \esummary for an expression can be computed in a compositional way (Section~\secref{full-algo-quadratic}).
    The cost of computing it for all subexpressions is quadratic in expression size, a problem we fix later.
    \item The \esummary for $e$ can be converted back to an expression $e^\prime$ which is $\alpha$-equivalent to $e$ (Section~\secref{rebuild}).  That is, 
    \esummaries lose no information, except the names of the bound variables.
    \item Two \esummaries are equal if and only if the expressions from whence they came are $\alpha$-equivalent.
\end{itemize}
\noindent
\paragraph{Step 2 (Section~\secref{step2}).}  At this stage, it may appear that not much has been gained.  An expression $e$ and its \esummary are inter-convertible, and comparing \esummaries is not much faster than comparing the corresponding expressions. But, \esummaries enjoy a crucial extra property: \emph{unlike expressions, we can easily represent an \esummary in a hashed form that is much more compact and enjoys $\mathcal{O}(1)$ comparison time}.

The two-step approach makes the algorithm easier to reason about.
Step 1 loses no information, and hence cannot give rise to false positives.
Step 2 is just regular hashing and, like any hash, can suffer from collisions and hence false positives;
but in Section~\secref{proof-hash-is-strong} we show that the probability of collision remains inversely exponential in the number of bits in the hash code.

\section{Step 1: A Compositional E-Summary}  \label{sec:step1}

In this section we give the details of our \esummary. Although it does not yet give a way to efficiently find equal subexpressions, it lays the groundwork for Step 2 discussed in Section~\secref{step2}.
Splitting the development in two in this way allows much easier reasoning about correctness.

For the sake of concreteness, we  use  Haskell to express our algorithm, its data types
and its functions.

\subsection{Preliminaries} \label{sec:preliminaries}

First, we need a datatype for representing the expression:
\begin{code}
type Name = String

data Expression = Var Name
                | Lam Name Expression
                | App Expression Expression 
\end{code}
\noindent
For simplicity, we use \texttt{String} for variable names and assume they can be compared in constant time; a practical implementation should replace the \texttt{String} names with unique identifiers that support constant-time comparison. As specified in Section~\secref{goal}, we assume that the variable names are all unique. This requirement is easy to satisfy by renaming the variables during preprocessing.

This language is very minimal, but it is enough to demonstrate the $\alpha$-equivalence challenge, and it can readily be extended to handle richer binding constructs (let, case, etc.), as well as
constants, infix function application, and so on.

\subsection{The Basic E-Summary} \label{sec:rebuild}

Recall from Section~\secref{compositional-challenge} that adding a lambda at the root of an expression
must transform distinct \esummaries (for, say, \lstinline|x+2| and \lstinline|y+2|) into the same one (for \lstinline|\x.x+2|).  To account for this,
we define the \esummary of an expression $e$ to be a pair of:
\begin{itemize}
 \item The \emph{structure}, or shape, of $e$ (Section~\secref{structure}). The structure of $e$ completely describes $e$ apart from its free variables (imagine every free variable being replaced by \lstinline{<hole>}, 
 so \lstinline{(add x y)} has same structure as \lstinline{(add x x)}).
    \item The \emph{free-variable map} of $e$ (Section~\secref{free-variables-map}). The variable map of $e$ is a list of $e$'s free variables, each with a \emph{tree of positions} in $e$ where it occurs (we define positions in Section~\secref{positions}).
\end{itemize}
\noindent
Therefore, an \esummary is a pair of a structure and a free-variable map:
\begin{code}
  data ESummary = ESummary Structure VarMap
\end{code}
We will elaborate each of these types in the following sections. In Figure~\ref{fig:esummaries}, we show how \esummaries are built up for an example expression.
Our basic algorithm has the following signature:
\begin{code}
  summariseExpr :: Expression -> ESummary
\end{code}
\noindent
The function \lstinline{summariseExpr} converts an \lstinline{Expression} into its \lstinline{ESummary}. 
A key correctness property is that we can reconstruct an \lstinline{Expression} from an \lstinline{ESummary}, up to the names of bound variables. That is, it is possible to implement
\begin{code}
rebuild :: ESummary -> Expression
\end{code}
so that \lstinline{rebuild (summariseExpr e)} is $\alpha$-equivalent to \lstinline{e}; in other words, an \lstinline{ESummary} loses no necessary information. Indeed, others have suggested using a representation in which a lambda contains a list of the occurrences of its bound variable as the \emph{primary} representation of lambda terms \cite{Abel_2011,sato_2013,McBride_2018}.

\subsection{Expression Structure}\label{sec:structure}

The \emph{structure} of an expression $e$ expresses the shape of $e$, \emph{ignoring the identity of its free variables}.
\begin{code}
  data Structure
    = SVar                      -- Anonymous
    | SLam (Maybe PosTree) Structure
    | SApp Structure Structure 
\end{code}
Variables are replaced by an anonymous \lstinline{SVar}.  A lambda does not \emph{name} its bound variable; instead, it lists the positions at which that bound variable occurs in its body.  Of course an actual list, of the form \lstinline|{L,LLRL,RRL}| would be alarmingly inefficient, instead {\em position trees} (of type \lstinline{PosTree}) are used, as described in Section~\secref{positions}.

We will build values of type \lstinline|Structure| using ``smart constructors''.
\begin{code}
  mkSVar :: Structure
  mkSLam :: Maybe PosTree -> Structure -> Structure
  mkSApp :: Structure -> Structure -> Structure
\end{code}
You can think of these as simply renamings of the underlying data constructors (e.g. \lstinline|mkSVar = SVar|), but
in Section~\secref{hash-structure} we will  exploit the flexiblity of being able to redefine \lstinline|mkSVar|.

\subsection{Free Variables Map}\label{sec:free-variables-map}

The free-variable map of an expression maps each free variable of $e$ to the positions at which that variable occurs. It supports the following operations:

\begin{code}
emptyVM      :: VarMap
singletonVM  :: Name -> PosTree -> VarMap
extendVM     :: Name -> PosTree -> VarMap -> VarMap
removeFromVM :: Name -> VarMap 
             -> (VarMap, Maybe PosTree)
    -- Removes one item from the map, returning what
    -- the variable mapped to, or Nothing if it
    -- was not in the map
toListVM :: VarMap -> [(Name, PosTree)]
\end{code}
\noindent
One possible implementation of \lstinline|VarMap| is to use Haskell's \lstinline|Data.Map| library:
\begin{code}
  type VarMap = Map Name PosTree
\end{code}
We will introduce a few more operations on \lstinline{VarMap} as we go along; all can be implemented straightforwardly using standard libraries.

\subsection{Position Trees}\label{sec:positions}

A value of type \lstinline{PosTree} identifies a set of one or more \lstinline{SVar} nodes inside a \lstinline|Structure|.  A \lstinline{PosTree} is a skeleton tree, with the same structure as the expression,  reaching only the leaves of the expression that are occurrences of one particular variable:
\begin{code}
data PosTree
  = PTHere
  | PTLeftOnly PosTree
  | PTRightOnly PosTree
  | PTBoth PosTree PosTree
\end{code}
So the occurrences of variable \lstinline{"x"} in
\begin{code}
App (App (Var "f") (Var "x")) (Var "x")
\end{code}
are described by the position tree
\begin{code}
PTBoth (PTRightOnly PTHere) PTHere
\end{code}
In a \lstinline{Structure}, an \lstinline{SLam} node contains a position tree that describes all the occurrences of that variable.  A position tree always represents \emph{one or more} occurrences, so
an \lstinline{SLam} node actually contains a \lstinline{(Maybe PosTree)}, with \lstinline{Nothing} indicating that the bound variable does not occur at all in the body of the lambda.

As with \lstinline{Structure}, we use ``smart constructors'' (\lstinline{mkPTHere}, \lstinline{mkPTBoth}, etc.) so that we can  give these ``constructors'' extra behaviour in Section~\secref{hash-structure}.

\subsection{Full Algorithm}\label{sec:full-algo-quadratic}

After designing the auxiliary data structures, we may instantiate our algorithm as follows
\begin{code}
data ESummary = ESummary Structure VarMap

summariseExpr :: Expression -> ESummary
summariseExpr (Var v)
  = ESummary mkSVar (singletonVM v mkPTHere)

summariseExpr (Lam x e)
  = ESummary (mkSLam x_pos str_body) vm_e
  where
    ESummary str_body vm_body = summariseExpr e
    (vm_e, x_pos) = removeFromVM x vm_body

summariseExpr (App e1 e2)
  = ESummary (mkSApp str1 str2) (merge vm1 vm2)
  where
    ESummary str1 vm1 = summariseExpr e1
    ESummary str2 vm2 = summariseExpr e2
    merge = 
         mergeVM mkPTLeftOnly mkPTRightOnly mkPTBoth

mergeVM :: (PosTree -> PosTree)  -- Left only
        -> (PosTree -> PosTree)  -- Right only
        -> (PosTree -> PosTree -> PosTree)  -- Both
        -> VarMap -> VarMap -> VarMap
\end{code}
Most of the work is done in \lstinline{App} nodes, where we need to combine variable maps from the node's children. To that end, we use a new function \lstinline{mergeVM}, which combines the position trees from the children's maps. The three argument functions to \lstinline{mergeVM} say what to do if only the left map has the variable in its domain, only the right map does, or both.  In the call to \lstinline{mergeVM} we simply use the constructors from \lstinline{PosTree} for these three cases.

The time complexity of this version of our algorithm is quadratic,
because at each \lstinline{App} node the \lstinline{mergeVM} operator must touch every element of the domain of the mapping, taking time proportional to the number of free variables of the expression. 
In Section~\secref{smaller-tree}, we discuss the key optimization needed to bring the complexity down to log-linear. However, we first prove that the conversion to \lstinline{ESummary} is reversible by designing the \lstinline{rebuild} function.

\subsection{Rebuilding}

The \lstinline{rebuild} function (Section~\secref{rebuild}) is easy to write
\begin{code}
rebuild :: ESummary -> Expression
rebuild (ESummary SVar vm) =
  Var (findSingletonVM vm)
rebuild (ESummary (SLam p s) vm) =
  Lam x (rebuild (ESummary s (extendVM x p vm)))
    where x = ... -- fresh variable name
rebuild (ESummary (SApp s1 s2) vm) =
  App (rebuild (ESummary s1 vm1))
      (rebuild (ESummary s2 vm2))
    where m1 = mapMaybeVM pickL vm
          m2 = mapMaybeVM pickR vm
    
pickL :: PosTree -> Maybe PosTree
pickL (PTLeftOnly p)  = Just p
pickL (PTBoth pl _)   = Just pl
pickL _               = Nothing

findSingletonVM :: VarMap -> Name
  -- The map should be a singleton map;
  -- return its unique key
mapMaybeVM :: (PosTree -> Maybe PosTree)
           -> VarMap -> VarMap
    -- Apply the function to every element of
    -- the map; delete if the function returns Nothing
\end{code}
\noindent
In this function we use two new functions over \lstinline{VarMap}:
\begin{itemize}
    \item \lstinline{findSingletonVM} expects its argument to be a singleton map, and returns
    the unique \lstinline{Name} from its domain (which should be mapped to \lstinline{PTHere}).
    The function fails if the map is not a singleton, but that should not happen if the \lstinline{ESummary} is well-formed.
    \item \lstinline{mapMaybeVM} applies a function to every element of the domain
    of the map; if the function returns \lstinline{Nothing} that element is deleted.
\end{itemize}
\noindent
In the \lstinline{SLam} case we have to invent a fresh variable name, since the original name is not 
recorded, and hence the returned expression is only $\alpha$-equivalent to the original,
not identical\footnote{
As an alternative, it would
be easy to record that name in the \lstinline{Structure}, to recover \emph{precisely}
the original expression, rather than just an $\alpha$-equivalent one. If we did so, this 
name should not participate in calculation of the hash values described in Section~\secref{step2}.}.

Why do we go to the trouble of defining \lstinline|rebuild|, which is not even part of the original problem specification?  \emph{We define \lstinline|rebuild| because its existence guarantees that our \esummary is not information-losing}, and that in turn guarantees that the hash-code for an \esummary will have few collisions (assuming it is a strong hash). This is important: for example,
consider a degenerate, information-losing \esummary that recorded only the \emph{size} of the tree; it would be fast and compositional, but its information loss would lead to rampant false positives.

In the next section we will optimize the \esummary to improve the complexity of \lstinline|summariseExpr|,
using the \lstinline|rebuild| function to drive our decisions about what information the \esummary needs to record.

\subsection{Using the Smaller Subtree} \label{sec:smaller-tree}

When processing an \lstinline{App} node, the algorithm from Section~\secref{full-algo-quadratic} uses \lstinline{mergeVM} which transforms every element of its range, thereby taking time proportional
to the number of free variables of the expression.
For very unbalanced trees this might be quadratically expensive. In this section we modify the algorithm so that it only transforms the \emph{smaller} map, leaving the other unchanged. The more unbalanced the tree, the less traversal we do; the worst case becomes a balanced tree, and that has only $\mathcal{O}(n \log n)$ complexity.

First, we augment the \lstinline{Structure} datatype with a \lstinline|Bool| flag in \lstinline|SApp| that records
which child has more free variables:
\begin{code}
  data Structure
    = SVar
    | SLam (Maybe PosTree) Structure
    | SApp Bool Structure Structure
           -- True  if the left  expr has more free vars
           -- False if the right expr has more free vars
\end{code}
\noindent
Now, the key \lstinline{App} case of \lstinline{summariseExpr} becomes

\begin{code}
summariseExpr (App e1 e2) = ESummary str vm
 where
  ESummary str1 vm1 = summariseExpr e1
  ESummary str2 vm1 = summariseExpr e2
  str = mkSApp left_bigger str1 str2
  tag = structureTag str
  vm  = foldr add_kv big_vm (toListVM small_vm)
  left_bigger = vm1 `isBiggerThanVM` vm2
  (big_vm, small_vm) = if left_bigger
                       then (vm1, vm2)
                       else (vm2, vm1)
  add_kv :: (Name, PosTree) -> VarMap -> VarMap
  add_kv (v, p) vm 
    = alterVM (\mp -> mkPTJoin tag mp p) v vm
  
alterVM :: (Maybe PosTree -> PosTree)
        -> Name -> VarMap -> VarMap
  -- Alter the value to which the key is mapped
\end{code}
As you can see from the definition of \lstinline{vm}, we convert the \emph{smaller} map
to a list of key-value pairs using \lstinline{toListVM}, and add them one at a time to the larger map using \lstinline{add_kv}.
The new function \lstinline{alterVM} alters the mapping at one key; the argument function allows the caller
to behave differently depending on whether or not the key was in the map beforehand.
But what is this mysterious \lstinline{tag} and the new \lstinline{mkPTJoin} operation on position trees?
\footnote{Readers who feel there must be a more mathematically elegant way to do this might enjoy Appendix~\ref{appendix:number-theory}, but the way described here is simple and fast.}

First, \lstinline{structureTag} extracts from a \lstinline{Structure} some kind of ``tag'' (an integer, say)
\begin{code}
type StructureTag = Int
structureTag :: Structure -> StructureTag
\end{code}
This function must satisfy one simple property: \emph{a structure must have a different tag to the tag of any of its sub-structures}. 
We abstract away the exact implementation of \lstinline{structureTag}, 
but one simple possibility is to have it return the depth of the \lstinline{Structure}, 
which can be computed and stored at the point when a \lstinline{Structure} is constructed.

Next, here is the new definition of \lstinline{PosTree}:
\begin{code}
data PosTree
  = PTHere
  | PTJoin StructureTag
           (Maybe PosTree)  -- Child from bigger map
           PosTree          -- Child from smaller map
\end{code}
As you can see from \lstinline{add_kv}, we make a tagged \lstinline{PTJoin} for every variable in the \emph{smaller} map,
but variables that appear only in the larger map are left untouched.
The tag allows \lstinline{rebuild} to invert this combining operation, in a unique way, determined by whether or not each item is tagged with the tag for this structure:
\begin{code}
rebuild (ESummary str@(SApp left_bigger s1 s2) vm) 
  = App (rebuild (ESummary s1 vm1))
        (rebuild (ESummary s2 vm2))
  where
    tag = structureTag str
    small_m = mapMaybeVM upd_small vm
    big_m   = mapMaybeVM upd_big   vm
    (vm1, vm2) = if left_bigger
                 then (big_m, small_m)
                 else (small_m, big_m)
                
    upd_small :: PosTree -> Maybe PosTree
    upd_small (PTJoin ptag mpt pt)
       | ptag == tag = Just pt
    upd_small _ = Nothing
    
    upd_big :: PosTree -> Maybe PosTree
    upd_big (PTJoin ptag mpt pt)
       | ptag == tag = mpt
    upd_big pt = Just pt
\end{code}
The \lstinline{left_bigger} flag in \lstinline{SApp} tells whether the bigger map came from the left or right argument. The tag in \lstinline{PTJoin} tells whether it belongs to the \lstinline{SApp} under consideration, or belongs to one deeper in the structure.

Note that in this version of \lstinline{summariseExpr}, the amount of work done in an \lstinline{App} node is proportional to the size of the \emph{smaller} variable map from the node's children.

\section{Step 2: Hashing an E-Summary}\label{sec:step2}

To obtain an integer hash value that can be used for downstream tasks, we will use the following functions
\begin{code}
hashStructure :: Structure -> HashCode
hashVM        :: VarMap    -> HashCode
hashESummary  :: ESummary  -> HashCode
\end{code}
\noindent
Our aim is for all of these functions to work in $\mathcal{O}(1)$ time.
Conversion from an \esummary to \lstinline{HashCode} is information-losing, and non-invertible.

To implement \lstinline{hashESummary}, we may simply do
\begin{code}
hashESummary (ESummary str map) =
  hash (hashStructure str, hashVM map)
\end{code}
We deal with \lstinline|hashStructure| in Section~\secref{hash-structure}, and \lstinline|hashVM| in Section~\secref{fast-hash-variable-map}.

\subsection{Hashing Structures}\label{sec:hash-structure}

\lstinline{hashStructure} can easily be implemented by computing the hash at construction time, and storing it in the \lstinline{Structure} object itself.
That would be enough to achieve the complexity bound we desire.
But there is an even more attractive possibility: since \lstinline{hashStructure} is the \emph{only} function we will need for structures,
we can \emph{represent a structure simply by its hash code}, dispensing entirely with the tree, thus
\begin{code}
type Structure = HashCode

-- "Constructors" combine hash values
mkSVar :: Structure
mkSLam :: Maybe PosTree -> Structure -> Structure
mkSApp :: Structure -> Structure -> Structure

hashStructure :: Structure -> HashCode
hashStructure s = s
\end{code}
The ``constructors'' of the tree are implemented by $\mathcal{O}(1)$ hash combiners, and \lstinline{hashStructure} becomes the identity function.
We can apply precisely the same reasoning to \lstinline|PosTree|, and represent a value of type \lstinline{PosTree} by its \lstinline{HashCode}.

Of course, identifying each \lstinline{Structure} and \lstinline|PosTree| with its \lstinline{HashCode} has a much lower constant factor than representing structures and positions as trees: instead, we only manipulate hash codes.
In exchange, we will no longer be able to write \lstinline{rebuild}. However, recall that \lstinline|rebuild| is not used in the final implementation; its only purpose is that its existence shows the correctness of the algorithm.  By thinking first in terms of the non-information-losing data structure, and then thinking of efficient representations of those structures, we can get \emph{both} an easy correctness argument \emph{and} an efficient implementation.

\subsection{Hashing Variable Maps}\label{sec:fast-hash-variable-map}

Hashing variable maps is a little more tricky.
It would be prohibitively (indeed asymptotically) slow to 
compute the hash of the variable map afresh at each node. Instead, as for structures,
we would like to compute the hash of a node's variable map using the hashes of its children.
Doing so is far from trivial.  We might try to pair a map with its hash, thus:
\begin{code}
data VarMap = VM (Map Name PosTree) HashCode
\end{code}
\noindent 
and try to compute the hash for \lstinline{(f vm args)}, where \lstinline{f} is a function
that returns a new \lstinline{VarMap}, from the hash of \lstinline{vm} and \lstinline{f}'s other
arguments \lstinline{args}.
But consider \lstinline{removeVM}: how can we start with the hash of a map, and compute the hash of a map from which a particular entry has been removed?

\label{sec:xor}
Our key idea is this: \emph{we define the hash of a variable map as the XOR, written $\oplus$, of the hashes of its entries}, where an entry is a (variable, position-tree) pair $(v, p)$.  This definition has big advantages:
\begin{itemize}
    \item Since $\oplus$ is commutative and associative, it does not matter in which order we consider the entries.
    \item We can compute the hash of removing $(v, p)$ from a map $m$  by simply XORing $m$'s hash with the hash of $(v, p)$, since $(a \oplus b) \oplus a = b$
\end{itemize}
\noindent
More generally, we could use any operator $\oplus$ that is associative, commutative, and invertible.  
The trouble is that XOR is a cryptographically weak hash combiner, so using it to combine
hashes in this way looks suspicious---won't we get lots of unwanted collisions?
Fortunately, these fears are unfounded: we prove in Section~\secref{proof-hash-is-strong} that in our algorithm the use of XOR does not lead to excess hash collisions.

Computing hashes is now rather easy. The algorithm of Section~\secref{smaller-tree} needs only \lstinline{singletonVM}, \lstinline{alterVM}, and \lstinline{removeFromVM}:
\begin{code}
-- Arbitrary implementation - can be the builtin hash
entryHash :: Name -> PosTree -> HashCode
entryHash key pos = hash (key, pos)

singletonVM :: Name -> PosTree -> VarMap
singletonVM key pos = VM (Map.singleton key pos)
                         (entryHash key pos)

alterVM :: (Maybe PosTree -> PosTree)
        -> Name -> VarMap -> VarMap
alterVM f key (VM entries old_hash)
  | Just old_pt <- lookupVM entries key
  , let new_pt = f (Just old_pt)
  = VM (Map.insert key new_pt entries
       (old_hash $\oplus$ entryHash key old_pt
                 $\oplus$ entryHash key new_pt)
  | otherwise
  , let new_pt = f Nothing
  = VM (Map.insert key new_pt entries)
       (old_hash $\oplus$ entryHash key new_pt)

removeFromVM :: Name -> VarMap
             -> (VarMap, Maybe PosTree)
-- Deletes a Name from the VarMap, returning
-- its current PosTree, or Nothing if it was not in the map
removeFromVM key map@(VM entries old_hash)
  | Just pt <- Map.lookup key entries
  = (VM (key `Map.delete` entries)
        (old_hash $\oplus$ entryHash key pt), Just pt)
  | otherwise
  = (map, Nothing)
\end{code}

\section{Analysis}

In this section, we formally analyze the time complexity of our final algorithm, and then upper-bound the probability of obtaining incorrect results due to hash collisions. Throughout this section we use $|e|$ to denote the number of nodes in expression $e$.

\subsection{Time Complexity} \label{sec:complexity}

We start by bounding the amount of work done in \lstinline{App} nodes, and then derive the time complexity for the entire algorithm.  We present first a formal proof, and then an intuitive argument which may help the reader to see why the formal proof works.

\begin{lemma}\label{lemma:map-ops-app}
Let $e$ be an expression. The total number of \lstinline{alterVM} and \lstinline{removeFromVM} operations performed in \lstinline{App} nodes by \lstinline{summariseExpr} ran on $e$ is $\mathcal{O}(|e| \log |e|)$.
\end{lemma}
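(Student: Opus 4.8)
The plan is to bound the number of map operations at each \lstinline{App} node by the size of the \emph{smaller} of the two children's variable maps, and then sum this quantity over the whole tree using a standard "small-to-large merging" argument (the same one that shows that merging smaller sets into larger ones costs $\mathcal{O}(n \log n)$ total). First I would observe the key fact, already visible in the code of Section~\secref{smaller-tree}: at an \lstinline{App} node with children $e_1, e_2$, the algorithm calls \lstinline{toListVM} on \lstinline{small_vm} and then performs exactly one \lstinline{alterVM} per element of that list; \lstinline{removeFromVM} is not called in \lstinline{App} nodes at all (it is called in \lstinline{Lam} nodes), so in fact the count is $\min(|\mathit{vm}_1|, |\mathit{vm}_2|)$ where $|\mathit{vm}_i|$ is the number of free variables of $e_i$. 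Hence the total is $\sum_{\text{App nodes } a} \min(|\mathit{vm}_1(a)|, |\mathit{vm}_2(a)|)$.

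Next I would bound this sum. The classical argument: think of each free-variable occurrence (or better, each element of each child map) as a token that, when it lies in the smaller map at an \lstinline{App} node, gets "charged" once and conceptually moves into the merged map. Each such charge at least doubles the size of the set containing that token (the token goes from a set of size $\le k$ into a merged set of size $\ge 2k$, since the other side was at least as large). Because the variable maps along any root-to-leaf chain have size at most $|e|$, any single token can be charged at most $\log_2 |e|$ times. Since there are at most $|e|$ tokens in total (each free-variable occurrence in $e$ contributes to the maps of its ancestors, but the relevant count is bounded by the number of \lstinline{Var} leaves, which is $\le |e|$), the total number of charges, and hence the total number of \lstinline{alterVM} calls, is $\mathcal{O}(|e| \log |e|)$. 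I would phrase this cleanly with a potential/weight function, e.g. assign to a variable map $\mathit{vm}$ the weight $\sum_{v \in \mathit{dom}(\mathit{vm})} \log_2 |e|$, or argue directly that $\sum_a \min(|\mathit{vm}_1(a)|,|\mathit{vm}_2(a)|) \le |e| \log_2 |e|$ by induction on the tree, using $\min(a,b) + (\text{cost in left subtree}) + (\text{cost in right subtree}) \le (a+b)\log_2(a+b)$ when $a+b \le |e|$.

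The main obstacle is getting the bookkeeping exactly right — in particular being careful that the quantity being summed really is $\min$ of the child map sizes and not something larger, and that the "each token charged $\le \log |e|$ times" claim is justified rigorously rather than hand-waved. The cleanest route is probably the explicit induction: prove by structural induction on $e$ that the total number of \lstinline{alterVM}/\lstinline{removeFromVM} operations in \lstinline{App} nodes of $e$ is at most $f(e) \log_2 f(e)$ where $f(e)$ is the number of \lstinline{Var} leaves in $e$ (or simply $|e|$), using at the \lstinline{App} step that $\min(|\mathit{vm}_1|,|\mathit{vm}_2|) \le |\mathit{vm}_1 \cup \mathit{vm}_2| / \text{(something)}$ together with the superadditivity of $x \mapsto x\log x$. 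The informal "intuitive argument" the authors promise to give afterward is exactly the doubling/token-charging picture, so I expect the formal proof to be the induction and the intuition to be the amortized view.
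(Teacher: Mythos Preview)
Your overall plan matches the paper's: bound the work at each \lstinline{App} node by the size of the smaller child's map, then sum. The paper's formal proof is indeed an induction (via a recurrence $T(n) \le \max_a (T(a)+T(n-1-a)+C\min(a,n-1-a))$ solved to $T(n)\le Cn\log_2 n$), and the informal argument is the doubling picture you anticipated.

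However, your token-charging argument as written has a gap. You take the tokens to be elements of the variable maps and claim that each charge ``at least doubles the size of the set containing that token.'' This is false: the two child maps may share variables, so the merged map can have the same size as each child. Concretely, in a left-leaning spine $\mathtt{App}(\mathtt{App}(\ldots(\mathtt{Var}\ x,\mathtt{Var}\ x)\ldots),\mathtt{Var}\ x)$ every child map is $\{x\}$, the merged map is still $\{x\}$, and the ``token'' $x$ is charged at every \lstinline{App} node --- $\Theta(n)$ times, not $\log n$. (Shrinkage at \lstinline{Lam} nodes causes a similar problem.) The total work in that example is still $O(n)$, so the lemma survives, but your amortization bookkeeping does not.

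The paper avoids this by immediately passing from map sizes to \emph{subtree sizes}: it uses $\min(|\mathit{vm}_1|,|\mathit{vm}_2|) \le \min(|v_1|,|v_2|)$ and then argues entirely with $|v_i|$. Subtree sizes are additive ($|v|=|v_1|+|v_2|+1$), so the doubling argument is sound there: a node lies in the smaller child subtree at most $\log_2 |e|$ times along its path to the root. Your induction sketch is salvageable if you make this same move --- induct on $|e|$ (not on the number of free variables) and use $\min(|v_1|,|v_2|)\le \frac{n-1}{2}$ together with $a\log a + (n-a)\log(n-a)+a \le n\log n$ for $a\le (n-1)/2$, which is exactly the inequality the paper proves.
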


\begin{proof}
Denote the number of operations in question as $O_{App}(e)$. Let us define
\begin{equation*}
    T(n) = \max_{e : |e| \leq n} O_{App}(e)\text{.}
\end{equation*}
We will prove that $T(n)$ is $\mathcal{O}(n \log n)$, which will conclude the proof of the lemma.

First, consider a single \lstinline{App} node $v$ in $e$ with children $v_1$, $v_2$. The number of map operations performed when processing $v$ is $\mathcal{O}(\min(m_1, m_2))$, where $m_i = |map_i|$ is the size of the free variables map from~$v_i$. Since a free variables map only contains variables that are used in a given subtree, its size is bounded by the number of nodes in the subtree - i.e. $m_i \leq |v_i|$, and therefore $\min(m_1, m_2) \leq \min(|v_1|, |v_2|)$.

From the analysis above we get that for $n > 1$
\begin{equation}\label{eq:time}
    T(n) \leq \max_{1 \leq a < n} \left( T(a) + T(n - 1 - a) + C \cdot \min(a, n - 1 - a) \right)\text{,}
\end{equation}
where $a$ and $n - 1 - a$ correspond to $|v_1|$ and $|v_2|$, respectively, and $C$ is a constant resulting from the use of $\mathcal{O}$ notation. Due to symmetry, we can rewrite Equation~\ref{eq:time} as
\begin{equation}\label{eq:time-2}
    T(n) \leq \max_{1 \leq a \leq \frac{n - 1}{2}} \left( T(a) + T(n - 1 - a) + Ca \right)\text{.}
\end{equation}

Now, we will prove inductively that $T(n) \leq C n \log_2 n$. The base case of $n = 1$ holds, since $T(1) = 0$ as an expression consisting of a single node cannot have any \lstinline{App} nodes. 
Then
\def\eqnote#1{&&\hspace{-2em}\text{\textcolor{gray}{\small #1}}}
\begin{eqnarray*}
    T(n) &\leq &\max_{1 \leq a \leq \frac{n - 1}{2}} \left( T(a) + Ca + T(n - 1 - a) \right)\\
    \eqnote{Using $T(n-1-a) \le T(n-a)$:}\\
    &\leq &\max_{1 \leq a \leq \frac{n-1}{2}} \left( T(a) + Ca + T(n - a) \right)\\
    \eqnote{By inductive hypothesis on $T(a)$ and $T(n-a)$:}\\
    &\leq &\max_{1 \leq a \leq \frac{n-1}{2}} \left( Ca \log_2 a + Ca + C(n - a) \log_2 (n - a) \right)\\
    \eqnote{Regrouping terms:}\\
    &= &C \max_{1 \leq a \leq \frac{n-1}{2}} \left( a (\log_2 a + 1) + (n - a) \log_2 (n - a) \right)\\
    \eqnote{Using $\log_2 a + 1 = \log_2 a + \log_2 2 = \log_2 2a$:}\\
    &= &C \max_{1 \leq a \leq \frac{n-1}{2}} \left( a \log_2 2a + (n - a) \log_2 (n - a) \right)\\
    \eqnote{Using $2a < n$:}\\
    &< &C \max_{1 \leq a \leq \frac{n-1}{2}} \left( a \log_2 n + (n - a) \log_2 n \right)\\
    \eqnote{Expression under max does not depend on $a$:}\\
    &= &C ~ n \log_2 n
\end{eqnarray*}
\end{proof}
\noindent
An intuitive alternative to the above derivation may prove illuminating: in an \lstinline{App} node, we do work proportional to the smaller subtree. Let's imagine that we are touching every node in that subtree to mark the amount of work done; we need to compute the total number of touches. Now flip this around: how many times could a fixed node $v$ be touched? If we follow a path $v = u_1, u_2, ..., u_k$ from $v$ to the root, any \lstinline{App} node $u_i$ on such path could "trigger"~$v$ being touched, but only if $u_{i-1}$ was the smaller child of $u_i$. Therefore, if we follow the path $u_1, ..., u_k$, any time we see a node that triggered touching~$v$ the current subtree size at least doubles, so $v$ could only have been touched $\log n$ times. There are $n$ nodes $v$, so total touches can't exceed $n\log n$.

\begin{lemma}\label{lemma:map-ops}
Let $e$ be an expression. The total number of map operations (i.e. \lstinline{singletonVM}, \lstinline{alterVM} and \lstinline{removeFromVM}) performed by \lstinline{summariseExpr} ran on $e$ is $\mathcal{O}(|e| \log |e|)$.
\end{lemma}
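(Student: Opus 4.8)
The plan is to reduce the count of \emph{all} map operations to the count of operations in \lstinline{App} nodes, which Lemma~\ref{lemma:map-ops-app} already bounds by $\mathcal{O}(|e| \log |e|)$. First I would classify the map operations by the kind of node at which \lstinline{summariseExpr} performs them. Inspecting the three cases of \lstinline{summariseExpr}: a \lstinline{Var} node performs exactly one \lstinline{singletonVM}; a \lstinline{Lam} node performs exactly one \lstinline{removeFromVM}; and an \lstinline{App} node performs some number of \lstinline{alterVM} calls (one per entry of the smaller child map) plus, depending on how one accounts for the $\min(m_1,m_2)$ bound, a constant number of auxiliary operations such as the \lstinline{toListVM} and the size comparison, none of which are the three tracked operations anyway.

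Next I would bound the contributions from \lstinline{Var} and \lstinline{Lam} nodes trivially: each contributes at most one tracked operation, and the number of \lstinline{Var} nodes plus the number of \lstinline{Lam} nodes is at most $|e|$. So these together contribute $\mathcal{O}(|e|)$ operations. The contribution from \lstinline{App} nodes is exactly the quantity $O_{App}(e)$ analysed in Lemma~\ref{lemma:map-ops-app}, which is $\mathcal{O}(|e| \log |e|)$. Summing the two gives a total of $\mathcal{O}(|e|) + \mathcal{O}(|e| \log |e|) = \mathcal{O}(|e| \log |e|)$, as required.

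I do not expect any real obstacle here; the lemma is essentially a bookkeeping corollary of Lemma~\ref{lemma:map-ops-app}. The only point that needs a moment's care is making sure that \emph{all} invocations of the three named operations occur at exactly the nodes claimed, and in particular that no \lstinline{alterVM}/\lstinline{removeFromVM} calls are hidden inside \lstinline{rebuild} or other helpers — but \lstinline{rebuild} is not invoked by \lstinline{summariseExpr}, so a direct reading of the three equations for \lstinline{summariseExpr} in Section~\secref{smaller-tree} suffices. The subsequent step in the paper (not part of this lemma) will presumably multiply this operation count by the $\mathcal{O}(\log|e|)$ cost of a single map operation to obtain the overall $\mathcal{O}(|e|(\log|e|)^2)$ running time.
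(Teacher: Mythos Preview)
Your proposal is correct and follows essentially the same approach as the paper: observe that each \lstinline{Var} and \lstinline{Lam} node contributes exactly one map operation, giving $\mathcal{O}(|e|)$ in total, and then invoke Lemma~\ref{lemma:map-ops-app} to bound the \lstinline{App} contribution by $\mathcal{O}(|e|\log|e|)$. The paper's proof is a two-sentence version of exactly this argument.
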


\begin{proof}
We perform exactly one map operation per every \lstinline{Var} and \lstinline{Lam} node, while the total number of map operations performed in \lstinline{App} nodes is bounded due to Lemma~\ref{lemma:map-ops-app}.
\end{proof}

\begin{theorem}
Let $e$ be an expression. The total running time of the \lstinline{summariseExpr} algorithm ran on $e$ is $\mathcal{O}(|e| \log^2 |e|)$.
\end{theorem}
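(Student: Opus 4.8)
The plan is to combine Lemma~\ref{lemma:map-ops}, which already bounds the \emph{number} of map operations (\lstinline{singletonVM}, \lstinline{alterVM}, \lstinline{removeFromVM}) by $\mathcal{O}(|e| \log |e|)$, with a bound on the \emph{cost} of each individual map operation, and then verify that the rest of the work done by \lstinline{summariseExpr} is dominated by these.

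First I would dispose of the non-map work. Since \lstinline{summariseExpr} is a single post-order traversal of $e$, it performs $\mathcal{O}(1)$ work per node apart from map operations: the smart constructors \lstinline{mkSVar}, \lstinline{mkSLam}, \lstinline{mkSApp} are $\mathcal{O}(1)$ hash combiners (Section~\secref{hash-structure}), \lstinline{structureTag} is $\mathcal{O}(1)$, \lstinline{mkPTJoin} is an $\mathcal{O}(1)$ hash combiner on \lstinline{HashCode}s, and \lstinline{isBiggerThanVM} is $\mathcal{O}(1)$ assuming the map stores its size. The only place where non-map work is not obviously $\mathcal{O}(1)$ per node is the \lstinline{App} case, where \lstinline{toListVM} flattens the \emph{smaller} child map; flattening a map of size $m$ costs $\mathcal{O}(m)$, but it is immediately followed by $m$ calls to \lstinline{add_kv}, i.e. $m$ \lstinline{alterVM} operations in an \lstinline{App} node. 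Hence the total \lstinline{toListVM} cost across the run is at most a constant times the number of \lstinline{alterVM} operations performed in \lstinline{App} nodes, which is $\mathcal{O}(|e| \log |e|)$ by Lemma~\ref{lemma:map-ops-app}. Summing, all non-map work (including the $\mathcal{O}(|e|)$ recursion overhead) is $\mathcal{O}(|e| \log |e|)$.

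Next I would bound the cost of a single map operation. Every \lstinline{VarMap} arising in the run is the free-variable map of some subexpression of $e$, so its set of keys consists of distinct variable names, of which there are at most $|e|$; thus the underlying \lstinline{Data.Map} has at most $|e|$ entries. A \lstinline{singletonVM}, \lstinline{alterVM}, or \lstinline{removeFromVM} therefore performs $\mathcal{O}(\log |e|)$ balanced-search-tree steps, each step doing an $\mathcal{O}(1)$ \lstinline{Name} comparison (by the constant-time-comparison assumption of Section~\secref{preliminaries}); the auxiliary hashing — \lstinline{entryHash} applied to a name and a \lstinline{PosTree} (now merely a \lstinline{HashCode}), together with a constant number of $\oplus$ operations — is $\mathcal{O}(1)$. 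Hence each map operation runs in $\mathcal{O}(\log |e|)$ time. Multiplying by the $\mathcal{O}(|e| \log |e|)$ operation count from Lemma~\ref{lemma:map-ops} gives $\mathcal{O}(|e| \log^2 |e|)$, and adding the $\mathcal{O}(|e| \log |e|)$ of non-map work leaves the bound $\mathcal{O}(|e| \log^2 |e|)$, as required.

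I expect the main obstacle to be the bookkeeping in the second step rather than any real difficulty: one must be careful that \emph{every} source of work is accounted for — in particular the \lstinline{toListVM}/\lstinline{foldr} pass over the smaller map and the recursion overhead — and explicit about exactly which assumptions (constant-time \lstinline{Name} comparison and hashing, an $\mathcal{O}(1)$ stored map size, and the hash-code representations of \lstinline{Structure} and \lstinline{PosTree} from Section~\secref{step2}) are doing the work in reducing per-node and per-operation costs to the stated asymptotics. Once those are pinned down, the arithmetic is immediate.
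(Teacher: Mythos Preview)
Your proposal is correct and follows the same approach as the paper: invoke Lemma~\ref{lemma:map-ops} for the $\mathcal{O}(|e|\log|e|)$ count of map operations, multiply by the $\mathcal{O}(\log|e|)$ cost of each balanced-tree operation, and observe that everything else is dominated. The paper's proof is considerably terser---it simply asserts that the map operations are ``dominant'' without the explicit accounting you give for \lstinline{toListVM}, the smart constructors, and the recursion overhead---so your version is in fact a more careful instantiation of the same argument.
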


\begin{proof}
The \lstinline{singletonVM}, \lstinline{alterVM} and \lstinline{removeFromVM} operations are dominant, so it is sufficient to bound the time spent in these operations.

From Lemma~\ref{lemma:map-ops}, we get that the total number of these operations is $\mathcal{O}(|e| \log |e|)$. Since we implement the map as a balanced binary search tree, addition and removal take time logarithmic in terms of the size of the map (which never exceeds $|e|$), while \lstinline{singletonVM} takes constant time.
\end{proof}

\subsection{Proof That Our Hashing Function Is Strong}\label{sec:proof-hash-is-strong}

In this section, we show that \lstinline{summariseExpr} composed with \lstinline{hashESummary} is a strong hashing function. 
We assume that we have access to a \emph{source of true randomness} (e.g. a stream of random bits), which can be used to instantiate randomly chosen hash combiners. Under this assumption, we prove that it is possible to choose all hashing functions and hash combiners in a randomized way, such that the probability of hash collisions is low.

\begin{definition}\label{def:random-function}
We will call a function $f : A \rightarrow B$ \emph{random} if every value of $f(a)$ for $a \in A$ was chosen uniformly over $B$, and independently from all the other values $f(a^\prime)$ for $a^\prime \neq a$.
\end{definition}

Note that a random function $f$ according to Definition~\ref{def:random-function} is one that was \emph{chosen randomly}, but when $f$ is called for a fixed argument, its value is deterministic.

In practice, it may not be possible to obtain true randomness, or one may prefer to fix the seed and make the hashing algorithm deterministic; nevertheless, our theoretical results indicate that there is no more reason to expect hash collisions than if we had used a strong combiner in Section~\secref{fast-hash-variable-map}.

\def\HH{{\mathbb{H}}}
\def\Hk{{\HH^k}}

Throughout this section we denote the hash width as $b \in \mathbb{Z}_+$, and $\HH = \{0,1\}^b$.
By \lstinline{hash} we denote calls to a generic hash function for primitive objects.

\begin{lemma}[XOR hash combiner for sets]\label{lemma:set-xor}
Given a random function $f : A \rightarrow \HH$, define the {\em set hash} $h : 2^A \rightarrow \HH$ by
\begin{equation*}
h(S) = \bigoplus_{s \in S} f(s)
\end{equation*}
where $\bigoplus$ denotes XOR-aggregation. Then
\begin{equation*}
    \displaystyle\mathop{\bigforall}_{S_1, S_2 \subseteq A; S_1 \neq S_2} \ 
    p\left(h(S_1) = h(S_2)\right) = \frac{1}{2^b}
\end{equation*}
\end{lemma}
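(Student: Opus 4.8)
The plan is to reduce the claim about the \emph{set hash} $h$ to a statement about the XOR of a nonempty collection of independent uniform random variables over $\HH = \{0,1\}^b$. The key observation is that $h(S_1) = h(S_2)$ if and only if $h(S_1) \oplus h(S_2) = 0$, and since $\oplus$ is associative, commutative, and self-inverse on $\HH$, the terms $f(s)$ for $s$ in the symmetric difference $S_1 \triangle S_2$ are exactly the ones that survive: every $s \in S_1 \cap S_2$ contributes $f(s) \oplus f(s) = 0$, while every $s$ in exactly one of the two sets contributes $f(s)$ once. So $h(S_1) \oplus h(S_2) = \bigoplus_{s \in S_1 \triangle S_2} f(s)$. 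Because $S_1 \neq S_2$, the symmetric difference $D := S_1 \triangle S_2$ is nonempty.

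The core lemma I would then establish (or simply invoke as a standard fact) is: if $D$ is a nonempty finite set and $\{f(s)\}_{s \in D}$ are independent and each uniform over $\HH$, then $\bigoplus_{s \in D} f(s)$ is uniform over $\HH$. I would prove this by picking any element $s_0 \in D$, writing $\bigoplus_{s \in D} f(s) = f(s_0) \oplus R$ where $R = \bigoplus_{s \in D \setminus \{s_0\}} f(s)$, and noting that $R$ is independent of $f(s_0)$ (it is a function of the other values only). Then for any fixed target $t \in \HH$, conditioning on $R = r$ gives $p(f(s_0) \oplus r = t) = p(f(s_0) = t \oplus r) = 2^{-b}$ since $f(s_0)$ is uniform; averaging over $r$ keeps the probability $2^{-b}$. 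Hence $p\!\left(\bigoplus_{s \in D} f(s) = 0\right) = 2^{-b}$, which is exactly $p(h(S_1) = h(S_2)) = 1/2^b$, and this holds for every pair with $S_1 \neq S_2$, giving the universally quantified statement.

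One small subtlety to address is that Definition~\ref{def:random-function} guarantees the values $f(s)$ are uniform and \emph{mutually} independent across all distinct arguments, so in particular the finite subfamily indexed by $D$ is independent — no extra argument needed there. The only genuine obstacle, and it is a mild one, is being careful that the self-cancellation argument is applied to the \emph{multiset} of terms correctly: each $f(s)$ appears once in $h(S_1)$ precisely when $s \in S_1$ and once in $h(S_2)$ precisely when $s \in S_2$, so in $h(S_1) \oplus h(S_2)$ it appears with multiplicity $[s \in S_1] + [s \in S_2] \in \{0,1,2\}$, and XOR kills the even multiplicities. Making this bookkeeping precise (e.g.\ by induction on $|S_1| + |S_2|$, or by a direct counting argument over the finite universe touched by $S_1 \cup S_2$) is the one spot where I would slow down, but it is routine. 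Everything else is immediate from the algebraic properties of $\oplus$ and the definition of a random function.
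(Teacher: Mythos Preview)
Your proposal is correct and follows essentially the same route as the paper: reduce $h(S_1)=h(S_2)$ to $\bigoplus_{s\in S_1\triangle S_2} f(s)=0$ via XOR's self-inverse property, then single out one element of the (nonempty) symmetric difference and use independence plus uniformity of that value to conclude the probability is exactly $2^{-b}$. The paper's argument is slightly terser (it stops at ``$f(x)$ equals a constant with probability $1/|\HH|$'' rather than first proving the full XOR is uniform), but the idea is identical.
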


\begin{proof}
Fix $S_1$ and $S_2$; from the properties of XOR, we have
\begin{equation*}
    \bigoplus_{s \in S_1} f(s) = \bigoplus_{s \in S_2} f(s) \leftrightarrow \bigoplus_{s \in S_1 \ominus S_2} f(s) = 0
\end{equation*}
where $S_1 \ominus S_2$ is the symmetric difference of $S_1$ and $S_2$. As $S_1 \ominus S_2 \neq \emptyset$, we can take any $x \in S_1 \ominus S_2$, and obtain

\begin{equation}\label{eq:xor}
    \bigoplus_{s \in S_1 \ominus S_2} f(s) = 0 \leftrightarrow f(x) = \bigoplus_{s \in S_1 \ominus S_2 - \{x\}} f(s)
\end{equation}

Since the values of $f$ are chosen independently, we may assume the value for $x$ is drawn last, at which point the right side of Equation~\ref{eq:xor} is a constant. As $f(x)$ is chosen uniformly, the probability of $f(x)$ being equal to any constant is $\frac{1}{|\HH|}$.
\end{proof}

\begin{restatable}{lemma}{lemmarechash}\label{lemma:rec-hash}
Let $\mathcal{D}$ be a datatype defined recursively (such as \lstinline{Structure} or \lstinline{PosTree}). It is possible to construct in a randomized way a compositional hashing scheme $h: \mathcal{D} \rightarrow \HH$ (that is, compute the hash for $d \in \mathcal{D}$ in the constructor by calling a hash combiner on the hashes of children), so that
\begin{equation*}
    \displaystyle\mathop{\bigforall}_{a, b \in \mathcal{D}; a \neq b} \ p(h(a) = h(b)) \leq \frac{|a|+|b|}{2^b}
\end{equation*}
where $|d|$ is the number of constructor calls when building $d$ (i.e. both those for "leaf" objects and "branch" combiners).
\end{restatable}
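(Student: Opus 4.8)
The plan is to make the scheme concrete and then prove the bound by strong induction on $|a|+|b|$. For the scheme I would give every constructor $C$ of $\mathcal{D}$ a distinct label $\ell_C$ (for constructors that also carry primitive fields, e.g.\ the \lstinline{StructureTag} inside \lstinline{PTJoin}, fold the generic \lstinline{hash} of those fields into the label, or equivalently treat them as extra ``leaf children''), draw a single \emph{random} function $f$ in the sense of Definition~\ref{def:random-function} from (label, tuple-of-child-hashes) pairs into $\HH$, and define $h$ recursively by $h(C(d_1,\dots,d_k)) = f(\ell_C, h(d_1),\dots,h(d_k))$, with $h(C)=f(\ell_C)$ on leaves. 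This is exactly a ``hash the constructor together with the hashes of its children'' scheme, hence compositional and computable in the constructor.

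The technical heart, which I would isolate as a standalone sub-lemma, is the following: list all subterms occurring in $a$ or in $b$ in order of non-decreasing size and reveal their hashes in that order; then each revealed value $h(t)=f(q_t)$, where $q_t$ is the pair assembled from $\ell_{C_t}$ and the already-revealed child hashes, is \emph{either} forced to equal some earlier $h(t')$ whose query $q_{t'}$ happened to coincide with $q_t$, \emph{or} is a uniformly random element of $\HH$ independent of every hash revealed so far. The reason is that every child of $t$ is a \emph{strictly smaller} subterm, so $q_t$ is determined strictly before $h(t)$ is revealed and in particular does not depend on $f(q_t)$ itself; thus a genuinely new query yields a fresh draw. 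This is the same ``assume the relevant value is drawn last'' device used in the proof of Lemma~\ref{lemma:set-xor}, adapted to the recursive setting.

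With the sub-lemma in hand the induction on $|a|+|b|$ has two cases. If $a$ and $b$ have different outermost constructors (including the case that one or both are leaves), their top-level queries differ because the labels differ, so $h(b)=f(q_b)$ is uniform and independent of $h(a)$, giving $p(h(a)=h(b))=2^{-b}\le(|a|+|b|)2^{-b}$. If $a=C(a_1,\dots,a_k)$ and $b=C(b_1,\dots,b_k)$ with the same outermost $C$, then $a\neq b$ forces $a_i\neq b_i$ for some $i$; split on whether the two child-hash tuples agree. If they agree, then in particular $h(a_i)=h(b_i)$, which by the inductive hypothesis (applicable since $a_i\neq b_i$ and $|a_i|+|b_i|\le|a|+|b|-2$) has probability at most $(|a|+|b|-2)2^{-b}$; if they disagree, the top-level queries differ and the sub-lemma gives a collision with probability $2^{-b}$. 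Summing the two contributions yields $p(h(a)=h(b))\le(|a|+|b|-1)2^{-b}<(|a|+|b|)2^{-b}$.

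I expect the main obstacle to be the one corner case the clean induction glosses over: in the ``same outermost constructor, tuples disagree'' branch, ``disagree'' is itself an event about $f$, and one must verify that conditioning on it does not secretly pin down $f$ at the two top-level query points. This can fail only if some \emph{proper} subterm of $a$ or $b$, through an earlier accidental collision, queries $f$ at exactly $q_a$ or $q_b$ — and reasoning via the ordered reveal shows that such a coincidence forces an equality $h(t')=h(a)$ (or $h(b)$) for a strictly smaller subterm $t'$, whose probability is already controlled by the inductive hypothesis and can be absorbed by a union bound over subterm pairs. This is precisely why the stated bound carries slack (it is $|a|+|b|$, not roughly half that). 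Carefully writing this bookkeeping, rather than the arithmetic above, is where the real work lies; everything else is routine.
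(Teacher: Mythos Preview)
Your overall plan mirrors the paper's, but you omit the one ingredient that makes the induction close cleanly: the paper salts the hash with the \emph{size} of the term, taking $h(d) = f\bigl(|d|,\, \mathit{hash}(Con),\, h(d_1),\dots,h(d_k)\bigr)$. With this salt, the top-level query for $a$ has first component $|a|$, while every proper subterm's query has a strictly smaller first component; hence the top-level query is \emph{automatically} fresh among all queries arising from subterms, and the ``draw it last'' device goes through with no corner case at all. The paper then inducts on $\max(|a|,|b|)$ and dispatches three short cases ($|a|>|b|$; equal sizes with different constructors; equal sizes with the same constructor), each in a line or two.

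Without the size salt, the corner case you flag is real, but the fix you sketch does not recover the stated bound. The event ``$q_{t'}=q_a$ for some proper subterm $t'$'' does force $h(t')=h(a)$, but that equality is then deterministic, not something the inductive hypothesis bounds; what must actually be bounded is $P(q_{t'}=q_a)$ itself. Doing so by descending to a child-hash collision $h(t'_i)=h(a_i)$ and union-bounding over all candidate $t'$ contributes on the order of $|a|+|b|$ terms, each of size $O\bigl((|a|+|b|)/2^b\bigr)$ by the inductive hypothesis, yielding $O\bigl((|a|+|b|)^2/2^b\bigr)$ rather than $(|a|+|b|)/2^b$. The ``slack'' you point to is a single additive $2^{-b}$, not a multiplicative factor, so it cannot absorb this blowup. (Worse, for $|a|$ much larger than $|b|$ the pair $(t'_i,a_i)$ can have $|t'_i|+|a_i|\ge |a|+|b|$, so the hypothesis need not even apply.) Adding the size salt is a one-line change to the scheme that eliminates the corner case entirely and lets your induction go through essentially as you wrote it.
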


\begin{proof}
See Appendix~\ref{appendix:lemma-rec-hash}.
\end{proof}

\begin{theorem}\label{theorem:strong-hash-pair}
Let $E$ be the set of all \lstinline{Expression} objects. It is possible to instantiate \lstinline{summariseExpr} with hashing functions and combiners into $\HH$ chosen in a randomized way, so that
\begin{equation*}
    \displaystyle\mathop{\bigforall}_{e_1, e_2 \in E; e_1 \not\equiv e_2} p(\texttt{\lstinline{h}}(e_1) = \texttt{\lstinline{h}}(e_2)) \leq 5\frac{|e_1|+|e_2|}{2^b}
\end{equation*}
where $\texttt{\lstinline{h}}(e) = \texttt{\lstinline{hashESummary}}(\texttt{\lstinline{summariseExpr}}(e))$ and $e_1 \not\equiv e_2$ means $e_1$ and $e_2$ are not $\alpha$-equivalent.
\end{theorem}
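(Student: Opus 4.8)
The plan is to combine the \emph{injectivity} of \texttt{summariseExpr} (guaranteed by the existence of \texttt{rebuild}) with the two hashing lemmas, Lemma~\ref{lemma:set-xor} and Lemma~\ref{lemma:rec-hash}, via a union bound over the handful of places where a collision could slip in. First I would record the one fact from Step~1 that we need: since \texttt{rebuild (summariseExpr e)} is $\alpha$-equivalent to \texttt{e}, the map sending an expression to its (tree-shaped) \esummary is injective modulo $\alpha$; hence $e_1 \not\equiv e_2$ implies that the pair $(\textit{structure}, \textit{varmap})$ differs between $e_1$ and $e_2$ --- either the structures differ, or the free-variable maps differ, as abstract trees/maps. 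It is this tree-shaped \esummary that \texttt{h} hashes, through \texttt{hashESummary}.

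Next I would set up the randomized instantiation: choose the primitive \texttt{hash}, and the combiners used by \texttt{mkSVar}/\texttt{mkSLam}/\texttt{mkSApp} and by \texttt{mkPTHere}/\texttt{mkPTJoin}, as mutually independent random functions in the sense of Definition~\ref{def:random-function}. Lemma~\ref{lemma:rec-hash} then yields compositional hashes $h_{\mathrm{PT}}$ on \texttt{PosTree} and $h_{\mathrm{Str}}$ on \texttt{Structure} (treating the embedded \texttt{PosTree} hash inside \texttt{SLam}, the \texttt{StructureTag}, and the \texttt{left\_bigger} flag simply as the leaf/branch data of the constructors), each with collision probability at most $\frac{|a|+|b|}{2^b}$ on distinct inputs; since a single \texttt{PosTree} has at most $|e|$ nodes and the \texttt{Structure} spine has exactly $|e|$ nodes, each such bound is at most $\frac{|e_1|+|e_2|}{2^b}$. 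For the variable map, \texttt{hashVM} is exactly the set hash of Lemma~\ref{lemma:set-xor} applied to the set of entries $(v, h_{\mathrm{PT}}(p))$ with $f = \texttt{entryHash} = \texttt{hash}(\cdot,\cdot)$ (random conditionally on the already-fixed \texttt{PosTree} hashes), so two variable maps with \emph{distinct entry-sets} collide with probability $\tfrac{1}{2^b}$.

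With this in place, the argument is a decomposition of the event $\texttt{h}(e_1) = \texttt{h}(e_2)$ into five ``bad'' sub-events, each of probability at most $\frac{|e_1|+|e_2|}{2^b}$: (i) the outer pairing hash \texttt{hash(hashStructure str, hashVM map)} collides on distinct inputs; (ii) the structures differ as abstract trees but agree after every embedded \texttt{PosTree} is replaced by its hash --- this forces an $h_{\mathrm{PT}}$ collision on some differing position tree; (iii) the structures still differ after their \texttt{PosTree}s are hashed, yet $h_{\mathrm{Str}}$ collides; (iv) the variable maps differ but become equal after replacing each position tree by its hash --- again an $h_{\mathrm{PT}}$ collision; (v) the reduced variable maps differ yet \texttt{hashVM} collides. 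Since $e_1 \not\equiv e_2$ guarantees the structures or the variable maps differ, whenever $\texttt{h}(e_1) = \texttt{h}(e_2)$ at least one of (i)--(v) occurs, and summing the five bounds gives $5\frac{|e_1|+|e_2|}{2^b}$. Throughout one uses the standard ``draw the deciding random value last'' observation (as in the proof of Lemma~\ref{lemma:set-xor}) to pass from ``distinct inputs'' to a clean $\tfrac{1}{2^b}$ or $\tfrac{|a|+|b|}{2^b}$, exploiting the independence of the various combiners.

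I expect the main obstacle to be the \emph{layering} of the hashes: \texttt{Structure} and \texttt{VarMap} both contain \texttt{PosTree}s which are themselves hashed, so a low-level \texttt{PosTree} collision can ``mask'' a genuine difference higher up. The care needed is (a) to isolate that masking possibility into its own bad event --- items (ii) and (iv) above --- and to bound it \emph{without} union-bounding over all embedded position trees, since it suffices that \emph{some} differing position tree fails to collide, so picking one differing pair and using $P(A\cap B)\le P(A)$ is enough; and (b) on the complementary event, to check that the next level up really is being handed two distinct objects, so that Lemmas~\ref{lemma:rec-hash} and~\ref{lemma:set-xor} apply verbatim (this is where conditioning on the already-realised lower-level hashes matters). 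Getting the constant to land at exactly $5$ is then just the bookkeeping of these five events; a more careful case split (structures differ \emph{versus} variable maps differ) would even bring it down to $3$, but $5$ is all that is claimed.
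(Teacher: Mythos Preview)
Your proposal is correct and follows essentially the same route as the paper: use the invertibility of \texttt{summariseExpr} to reduce to a hash collision, then union-bound over the places where hashing can lose information, invoking Lemma~\ref{lemma:set-xor} for the XOR step and Lemma~\ref{lemma:rec-hash} for the compositional pieces. The only difference is bookkeeping: the paper groups the bad events as four recursive-hash categories (variable names, structures, position trees, entry hashes) plus the XOR aggregation plus the top-level combiner, summing to $4(|e_1|+|e_2|)+2 \le 5(|e_1|+|e_2|)$, whereas you split by \emph{layer} (outer pair; \texttt{PosTree} masking inside \texttt{Structure}; \texttt{Structure}; \texttt{PosTree} masking inside \texttt{VarMap}; \texttt{hashVM}) and fold the name/entry hashing into the random $f$ of Lemma~\ref{lemma:set-xor}. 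Your explicit treatment of the layering --- isolating the ``\texttt{PosTree} collision masks a higher-level difference'' event and bounding it by picking a single differing pair rather than union-bounding over all embedded trees --- is a point the paper's proof leaves implicit, so your write-up is if anything a little more careful on that front.
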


\begin{proof}
Since a full (hash-free) \esummary preserves all information relevant to $\alpha$-equivalence, the only way a collision can happen is when we convert pieces of an \esummary into hash values. We now consider all such places one by one.

First, we may get a collision when either hashing variable names, or auxiliary compositional objects (structures, position trees, and variable map entries that combine the position tree with the variable name). As the total number of calls to hash combiners in each of the four categories does not exceed $|e_1|+|e_2|$, from Lemma~\ref{lemma:rec-hash}, the probability of collision in any of the four cases is bounded by $4\frac{|e_1|+|e_2|}{2^b}$.

Moreover, collisions may arise due to XOR-aggregation of hashes when hashing variable maps. From Lemma~\ref{lemma:set-xor}, the probability of this event is bounded by $\frac{1}{2^b}$.

Finally, the top-level call to a hash combiner (to combine hashes of structure and variable map) may produce a collision. As we assume that combiner to be random, the probability of this event is simply bounded by $\frac{1}{2^b}$.

Summing up, we get
\begin{equation*}
p(\texttt{\lstinline{h}}(e_1) = \texttt{\lstinline{h}}(e_2)) \leq \frac{4(|e_1|+|e_2|) + 2}{2^b} \leq 5\frac{|e_1|+|e_2|}{2^b}
\end{equation*}
\end{proof}

\begin{theorem}\label{theorem:strong-hash-all-subexps}
Let $E$ be the set of all \lstinline{Expression} objects. It is possible to instantiate \lstinline{summariseExpr} with hashing functions and combiners into $\HH$ chosen in a randomized way, so that for any $e \in E$, \lstinline{summariseExpr} recovers the correct set of equivalence classes of subexpressions with probability at least $1 - 5|e| ^ 3 \cdot 2^{-b}$.
\end{theorem}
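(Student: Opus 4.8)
The plan is to obtain this as a union-bound corollary of Theorem~\ref{theorem:strong-hash-pair}. First I would fix the randomized instantiation of \lstinline{summariseExpr} supplied by that theorem and argue it already works here: running \lstinline{summariseExpr} on $e$ computes, compositionally, the \esummary of every subexpression, and applying \lstinline{hashESummary} at each node gives the hash $h(\cdot)$ of each subexpression, all from one and the same draw of randomness. The key structural observation is that hashing never produces a \emph{false negative}: if subexpressions $e_1$ and $e_2$ of $e$ are $\alpha$-equivalent then, by the Step~1 property that equal \esummaries exactly characterise $\alpha$-equivalence, their \esummaries are identical, so their hashes — a deterministic function of the \esummary once the random functions and combiners are fixed — agree. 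Hence the partition of subexpressions by hash equality is always a coarsening of the $\alpha$-equivalence partition, and the two coincide \emph{unless} some pair of subexpressions that are \emph{not} $\alpha$-equivalent collides.

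It then suffices to bound the probability that any such false positive occurs. An expression with $|e|$ nodes has at most $|e|$ subexpressions, hence at most $\binom{|e|}{2} \le \tfrac{1}{2}|e|^2$ unordered pairs of them. For any fixed pair $\{e_1, e_2\}$ of subexpressions with $e_1 \not\equiv e_2$, Theorem~\ref{theorem:strong-hash-pair} bounds the collision probability by $5\frac{|e_1|+|e_2|}{2^b} \le \frac{10\,|e|}{2^b}$, using $|e_1|, |e_2| \le |e|$; pairs that are $\alpha$-equivalent contribute nothing. A union bound over all pairs yields a total failure probability of at most $\tfrac{1}{2}|e|^2 \cdot \frac{10\,|e|}{2^b} = 5\,|e|^3 \cdot 2^{-b}$, whose complement is exactly the claimed $1 - 5|e|^3 \cdot 2^{-b}$.

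I expect no genuine obstacle here: Theorem~\ref{theorem:strong-hash-pair} does the heavy lifting, and what remains is the no-false-negatives observation plus a union bound. The one point a reader might stumble on — and which I would explicitly flag and dismiss — is that the per-pair collision events are highly dependent, since every subexpression hash comes from the same randomness; this is irrelevant, as a union bound is mere subadditivity and requires no independence, and Theorem~\ref{theorem:strong-hash-pair} is already stated for a single instantiation simultaneously valid for all argument pairs. The only other care needed is purely arithmetic: confirming that "$\le |e|$ subexpressions" and "$\le \tfrac12|e|^2$ pairs" are the right combinatorial quantities and that the constants multiply to $5$ (indeed $\tfrac12 \cdot 10 = 5$).
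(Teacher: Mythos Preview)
Your proposal is correct and follows essentially the same approach as the paper: a union bound over the at most $\tfrac{1}{2}|e|^2$ pairs of subexpressions, invoking Theorem~\ref{theorem:strong-hash-pair} on each. Your write-up is in fact more explicit than the paper's own proof, spelling out the no-false-negatives observation and the arithmetic $\tfrac{1}{2}|e|^2 \cdot \tfrac{10|e|}{2^b} = 5|e|^3 \cdot 2^{-b}$ that the paper leaves implicit.
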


\begin{proof}
There are $\binom{|e|}{2} < \frac{1}{2}|e|^2$ pairs of subexpressions; any single pair can cause a collision leading to \lstinline{summariseExpr} returning an incorrect set of equivalence classes. Probability of a collision for a fixed pair of subexpressions is bounded by Theorem~\ref{theorem:strong-hash-pair}.
\end{proof}

One practical consequence of Theorem~\ref{theorem:strong-hash-all-subexps} is that 128-bit hashes are enough even for very large-scale applications. Specifically, if $b = 128$ and we consider expressions up to a billion nodes, $|e| \leq 10^9$, then the probability of having at least one collision is bounded by approximately $5 \cdot 10^{27} \cdot 2^{-128} < 10^{-10}$.
In Appendix~\ref{appendix:collisions}, we empirically verify that the observed collision rate is indeed consistent with theory.

\subsection{Incrementality}\label{sec:incremental}

One crucial property of our algorithm is compositionality: computing the hash of a subtree only requires the results from children, and there is no need to orchestrate anything across the entire expression, with the exception of ensuring all variable names are unique, which is an invariant that is easy to maintain. Therefore, the algorithm can be made incremental: if a subtree of node $v$ in an expression $e$ is modified, \esummaries (and therefore hashes) of most nodes will often stay unchanged.

More specifically, let us say that we have already computed all subtree hashes for an expression $e$, and we modify a subtree under node $v$. The only affected nodes are those lying on the path from $v$ to the root, and also those in the rewritten subtree (in particular, modifying a subtree might have required creating some fresh nodes). Recomputing \esummaries for the latter is unavoidable, and the cost of that depends on the specifics of a rewrite -- for example, if the subtree of $v$ has constant size, then the work done for the nodes in that subtree will also be constant. In this section, we will focus on the former, and try to bound the amount of work needed to recompute \esummaries for all nodes on the path from $v$ to the root.

Let $v = u_h, u_{h-1}, \cdots, u_0$ be the path in question, where $h$ is the depth of $v$. Work done when recomputing the \esummary for $u_i$ is in the worst case proportional to the size of the free variable map for $u_i$. Note that any free variable that is used in the subtree of $u_i$ is either bound in one of the nodes $u_j$ for $j < i$, or it is never bound in the entire expression $e$. If we denote the number of variables that are nowhere bound as $f$, then the work done in $u_i$ is $\mathcal{O}(i + f)$; summing over all $i$, we get a bound of $\mathcal{O}(h^2 + hf)$. Of course, the upper-bound of $\mathcal{O}(|e| (\log |e|)^2)$ still holds.

In summary, if a subtree of a node at depth $h$ is rewritten, then updating all subtree hashes takes time $\mathcal{O}(\min(h^2 + hf, |e| (\log |e|)^2))$. While in the worst case this is of the same order of magnitude as recomputing all hashes from scratch, if all variables are bound ($f = 0$), and the tree is reasonably balanced, we obtain something much faster; in particular, if the tree is balanced (i.e. has height $\mathcal{O}(\log |e|)$), then recomputing after a rewrite takes time $\mathcal{O}((\log |e|)^2)$.

\section{Empirical Evaluation}\label{sec:empirical}

\begin{table}
\centering
\caption{Algorithms considered in our evaluation. Note that some of them do not produce the correct set of equivalence classes, and are only given to define complexity minima.}
\label{tab:algorithms}
\def\Good{\textcolor{ForestGreen}{Yes}}
\def\Bad{\textcolor{red}{No}}
\begin{tabular}{llcc}
\toprule
Algorithm & Time  & True & True\\
& complexity & pos. & neg. \\
\midrule
\textcolor{col:str}{Structural} (\S\secref{syntactic-equiv}) & $\mathcal{O}(n)$ & \Good & \Bad \\
\textcolor{col:db}{De Brujin} (\S\secref{deBruijn}) & $\mathcal{O}(n \log n)$ & \Bad & \Bad \\
\midrule
\textcolor{col:ln}{Locally Nameless} (\S\secref{locally-nameless}) & $\mathcal{O}(n^2 \log n)$ & \Good & \Good \\
\textcolor{col:ours}{Ours} (\S\secref{goal} - \S\secref{step2}) & $\mathcal{O}(n ~ (\log n)^2)$ & \Good & \Good \\
\bottomrule
\end{tabular}
\end{table}

\begin{figure*}[t]
\centering
\includegraphics[width=0.48\linewidth]{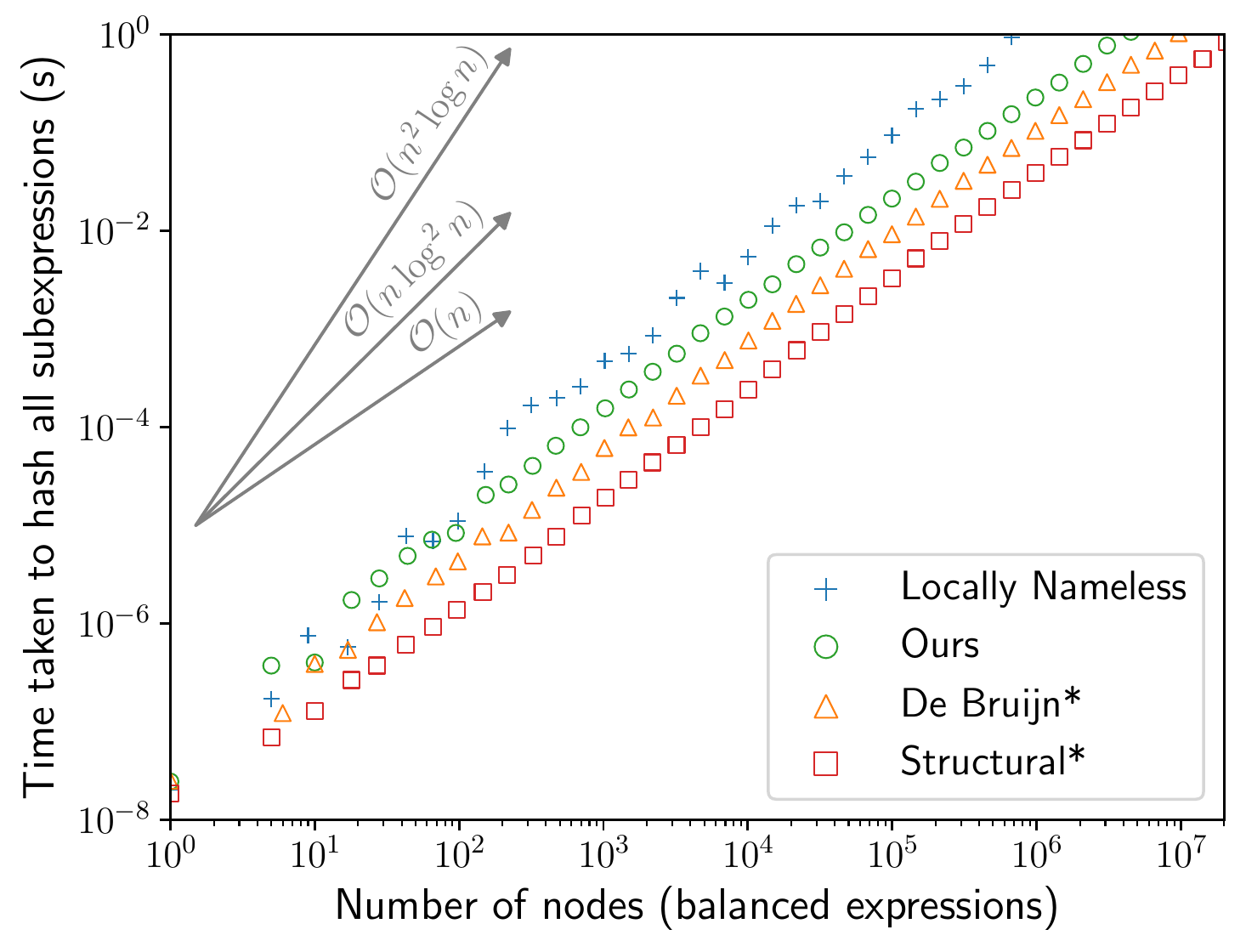}
\includegraphics[width=0.48\linewidth]{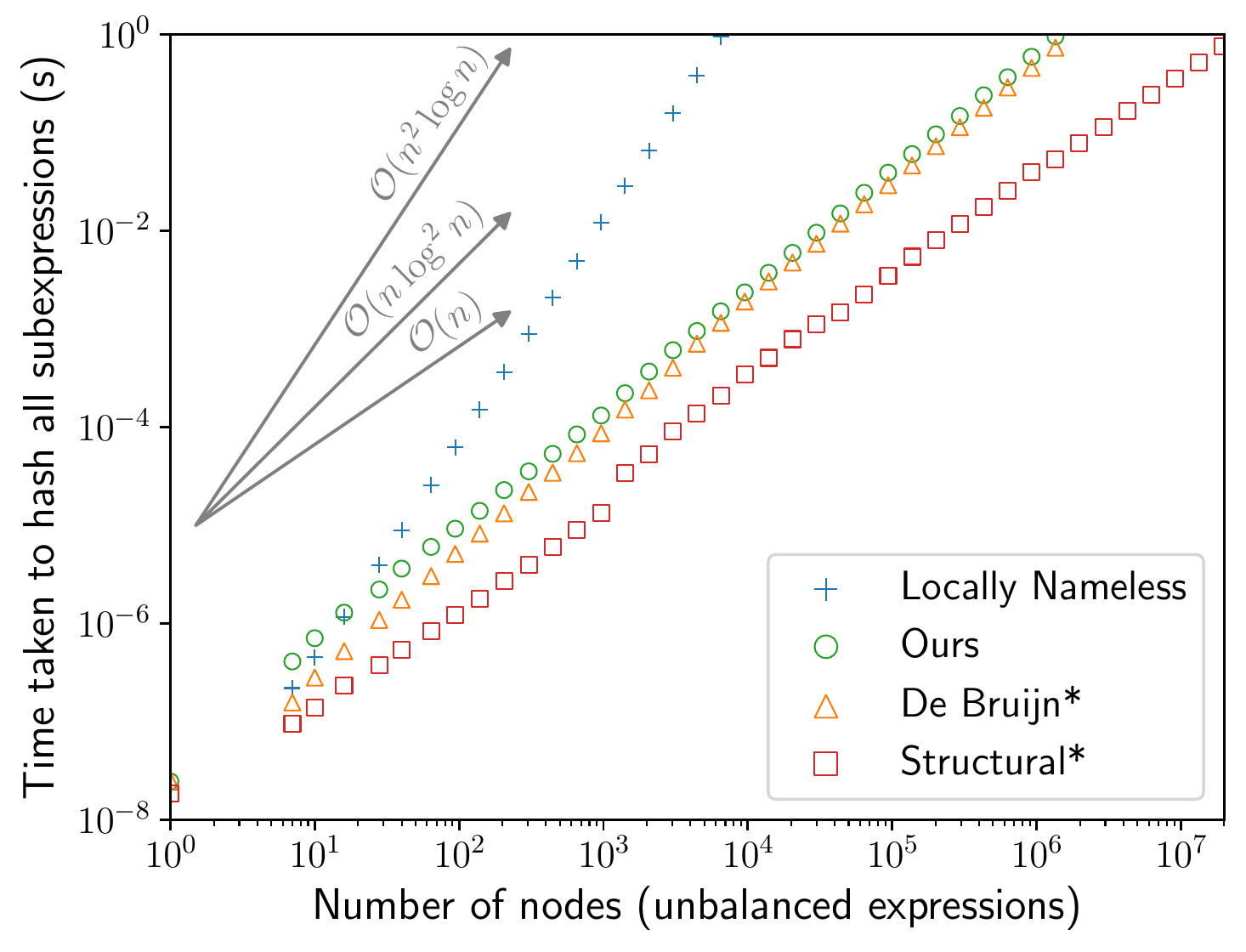}
\caption{Empirical evaluation on synthetically generated expressions: balanced trees (left), and highly unbalanced ones (right). Note that the algorithms marked with (*) produce an incorrect set of equivalence classes, so the key comparison is between Locally Nameless and Ours.}
\label{fig:random-exprs-benchmark}
\end{figure*}

In this section we empirically evaluate the running time of our final algorithm. We consider two settings: synthetic, automatically generated lambda terms (Section~\secref{eval-synthetic}), and several hand-picked realistic examples corresponding to commonly used machine learning models (Section~\secref{eval-hand-picked}).

In Table~\ref{tab:algorithms}, we list the hashing algorithms that we compare in this section. The first two,
Structural and De Bruijn, are incorrect: they do not meet the specification outlined in Section~\secref{goal}. 
Specifically, they may equate distinct 
expressions (false positives), or fail to equate $\alpha$-equivalent ones (false negatives).  We present these algorithms here to 
give a sense of the extra performance cost of hashing modulo $\alpha$-equivalence.

The Locally Nameless algorithm is the fastest one we know that meets the specification, while Ours is
the algorithm presented in this paper. 

We implemented all four in Haskell, over the following expression type
\begin{code}
data Expression h = Var h Name
                  | Lam h Name Expression
                  | App h Expression Expression
\end{code}

In each case, the hashing algorithm simply annotates each node with a hash-value, yielding a result of type \lstinline|Expression| \lstinline|HashCode|. We did not model the cost of putting these hash-codes into a hash table and identifying equivalence classes, since this cost is the same in all cases. The implementation of our algorithm is optimized over the source code listed in this paper by the addition of strictness annotations and replacing two map operations with a single fused map operation in a couple of places; similar optimizations were applied to the baseline algorithms.
The garbage collector was disabled during timing.
Constant factors may of course vary in other implementations, but we are mainly interested in how the algorithms behave relative to each other.

\subsection{Random Expressions} \label{sec:eval-synthetic}

In Figure~\ref{fig:random-exprs-benchmark}, we show time taken by the four algorithms to hash all subexpressions of randomly generated expressions of varying size.  We generated two different families of random expressions:
\begin{itemize}
    \item \emph{Balanced trees}.  Here we generated expressions that are roughly balanced trees, at each point generating a \lstinline|Lam| or \lstinline|App| node with equal probability.  Each \lstinline|Lam| node has a fresh binder, and at variable occurrences we choose one of the in-scope bound variables.
    \item \emph{Wildly unbalanced trees} with very deeply nested lambdas.  This case is not as unrealistic as it sounds:
    a realistic language will include \lstinline|let| bindings, and deeply-nested stacks of \lstinline|let| expressions are very common in practice, especially in machine-generated code.
\end{itemize}
\noindent
The results reassure us that our algorithm meets the claimed complexity bounds -- note the quadratic behaviour of Locally Nameless for unbalanced trees.  Moreover, although there is a constant-factor cost compared to the non $\alpha$-respecting algorithms, the slowdown is much smaller than Locally Nameless.

\subsection{Real-Life Examples} \label{sec:eval-hand-picked}

Our interest in the problem of hashing modulo $\alpha$-equivalence was directly motivated by our parallel work on a prototype compiler for machine learning models, and the discovery that a significant amount of time was being spent on pre-processing the AST to find equivalent subtrees. In that context, in Table~\ref{tab:realistic-exprs-benchmark} we show a different empirical evaluation, using real expressions found in machine learning workflows: "MNIST CNN"~\cite{le1989handwritten} is a convolution kernel from a deep neural network used in computer vision; "GMM"~\cite{reynolds2009gaussian} is the Gaussian Mixture Model benchmark from the ADBench suite~\cite{ADBench}; and "BERT"~\cite{devlin2018bert} is a model from natural language processing, implemented using the PyTorch library~\cite{PyTorch}. Conveniently for our benchmarks, BERT also has a parameter controlling the number of ``layers'', which linearly scales the expression size due to loop unrolling. We see that on practical examples, our algorithm is only up to $4\times$ slower than running de Brujin on the expression once, and much faster than the locally nameless baseline, while enjoying a better bound on the worst-case time complexity.  In Figure~\ref{fig:bert-benchmark} we show performance on BERT as the number of layers is varied.

\begin{table}
\centering
\caption{Empirical evaluation on hand-picked realistic expressions. Each measurement is time in milliseconds to compute all subexpression hashes for each expression. Note that the algorithms marked with (*) produce an incorrect set of equivalence classes.}
\label{tab:realistic-exprs-benchmark}
\def\tm#1#2{$#1$ ms}
\def\colhead#1{#1}
\begin{tabular}{lrrr}
\toprule
Algorithm & \colhead{MNIST CNN} & \colhead{GMM} & \colhead{BERT 12}\\
          & \colhead{$n=840$} & \colhead{$n=1810$} & \colhead{$n=12975$}\\
\midrule
\textcolor{col:str}{Structural*} & \tm{0.011}{??} & \tm{0.027}{??} & \tm{0.38}{??} \\
\textcolor{col:db}{De Bruijn*} & \tm{0.035}{??} & \tm{0.089}{??} & \tm{1.70}{??} \\
\midrule
\textcolor{col:ln}{Locally Nameless} & \tm{0.30}{??} & \tm{2.00}{??} & \tm{820.0}{??} \\
\textcolor{col:ours}{Ours} & \tm{0.14}{??} & \tm{0.36}{??} & \tm{3.6}{??} \\
\bottomrule
\end{tabular}
\end{table}

\section{Related Work}\label{sec:related}

We found surprisingly few papers about the problem of finding common subexpressions modulo $\alpha$-equivalence.
Shao \emph{et al.} describe the impressive FLINT compiler, which uses de Bruijn representation and aggressive hash-consing to achieve very compact type representations and constant-time equality comparison \cite{flint}. It is not clear how they deal with the false positives and false negatives we mention in Section~\secref{deBruijn}. Murphy takes a similar approach in the TILT compiler \cite{murphy:tilt}.
Again, the goal is structure sharing and the mechanism is de Bruijn indexing, but he seeks to conceal the tiresome de Bruijn index-shuffling (which is somewhat exposed in FLINT) behind an abstraction ``curtain'' that allows the client to use a simpler named interface. He mentions the problem of false negatives, and concludes that the overheads of his abstractions are too high.

\begin{figure}[t]
\centering
\includegraphics[width=0.96\linewidth]{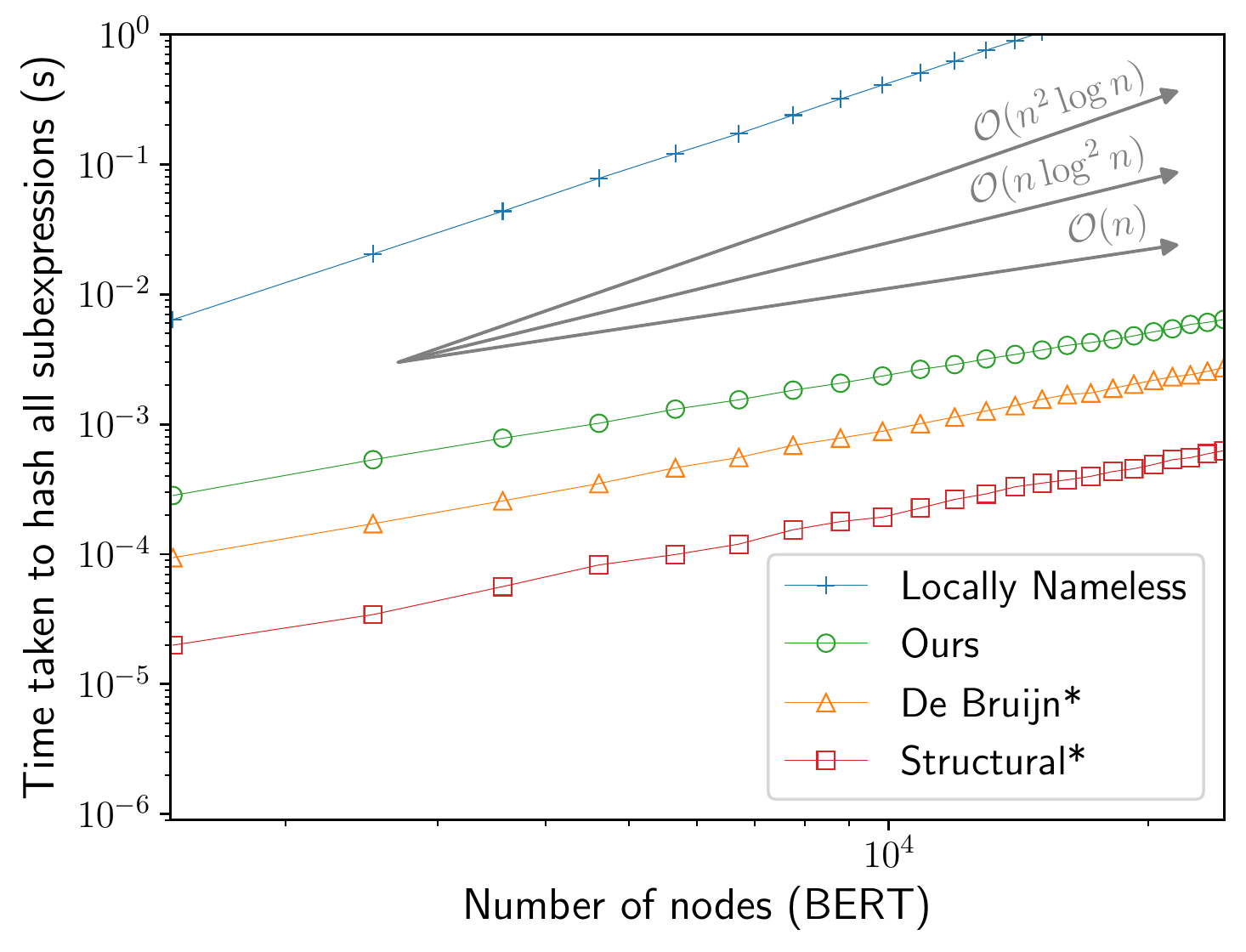}
\caption{Empirical evaluation on expressions from the BERT model.  Note that the algorithms marked with (*) produce an incorrect set of equivalence classes, so the key comparison is between Locally Nameless and Ours.}
\label{fig:bert-benchmark}
\end{figure}

Filliatre and Conchon describe a hash-consing library in OCaml, again with the goal of structure sharing \cite{filliatre}.  But their focus is very different to ours: they are concerned about the API design for a hash-consing library, including issues such as when to clear out the hash table. Concerning $\alpha$-equivalence, they use de Bruijn indexing from the outset, without discussion.

The ``locally nameless'' representation~\cite{mcbride04,chargueraud12} has a long history, indeed Weirich {\em et al.}~\cite{weirich} observe that it ``is mentioned in the conclusion of de Bruijn’s paper''. They further note that "If we remove names from bound patterns (which are preserved only for error messages) the locally nameless representation interacts nicely with hash-consing, as all $\alpha$-equivalent terms have the same representation".

The idea of representing a lambda with a ``map'' of the occurrences of its bound variable, which we adopt for our \esummaries in Section~\secref{step1},
has been studied before~\cite{Abel_2011,sato_2013}. Kennaway and Sleep describe another representation, director strings, in which information
about occurrences is stored in the application nodes, rather than the lambdas \cite{director-strings}.  McBride gives a very helpful overview of these approaches \cite{McBride_2018}, but none of them addresses the question of compositional hashing.

Dietrich \emph{et al.} \cite{dietrich} discuss hashing source code abstract syntax trees (ASTs), with the goal of minimising rebuilds in a build system. If, for example, one changes the white-space layout, the timestamp of the file will change, but the AST (and its hash) will not. They do not consider $\alpha$-equivalence at all, and it seems likely that an $\alpha$-renamed program would indeed be considered different by their system.

There is literature~\cite{zhao,thomsen12} on detecting code clones, or plagiarisms, which does typically hash ASTs, but there the goal is usually to generate many pairs of candidates (i.e.\ false positives are welcomed), so does not apply to our use cases.

\paragraph{Acknowledgements}  We warmly thank these colleagues for their feedback on earlier drafts of this paper: Conor McBride, David Collier, Leonardo de Moura, Max Willsey, Stephanie Weirich and Tom Minka.


\balance

\clearpage

\appendix

\section{Proof of Lemma \ref{lemma:rec-hash}}\label{appendix:lemma-rec-hash}

For convenience, we first restate Lemma~\ref{lemma:rec-hash} below, and then outline the proof.

\lemmarechash*

\begin{proof}
Given $d = Con(d_1, \cdots, d_k) \in \mathcal{D}$, we define
\begin{equation*}
h(d) = f(|d|, hash(Con), h(d_1), \cdots, h(d_k))
\end{equation*}
where $f$ is a random hash combiner. That is, we combine the hashes of children and the constructor, and salt it with the size $|d|$ of the object $d$. As $f$ only accepts elements of $\HH$ as arguments, here we silently assume $|d| < 2^b$; if that is not the case, then the bound to be proven is vacuous, and hence there is nothing to do.

We will now bound the probability of $h(a) = h(b)$ by induction on $\max(|a|, |b|)$.

It is easy to check that for $|a| = |b| = 1$ (i.e. both $a$ and $b$ are leaves) the inequality holds. Now, assume $\max(|a|, |b|) \geq 2$; without loss of generality $|a| \geq |b|$.

Denote $a = Con_a(a_1, \cdots)$, $b = Con_b(b_1, \cdots)$. We now distinguish three cases.

\paragraph{Case 1: $|a| > |b|$}

We can see that computing $h(a)$ and $h(b)$ involves exactly one call to $f$ of the form $f(|a|, *)$, while all the other calls are for $f(x, *)$ for $x < |a|$. Since the values for $f$ are drawn uniformly and independently, we can assume they are drawn in the order of increasing first argument to $f$ (breaking ties arbitrarily). When the value for $f(|a|, *)$ is being drawn, all other values can be considered constant, and so $p(h(a) = h(b)) = \frac{1}{2^b}$.

\paragraph{Case 2: $|a| = |b|$, $Con_a \neq Con_b$}

Here $h(a) = h(b)$ requires that either $hash(Con_a) = hash(Con_b)$, or the top level calls to $f$ produce a collision. Similarly to the previous case, the probability of either of these events can be bounded by $\frac{1}{2^b}$, and we get $p(h(a) = h(b)) \leq \frac{2}{2^b}$.

\paragraph{Case 3: $|a| = |b|$, $Con_a = Con_b$}

In this final case, the first two arguments to $f$ are the same, which means that a hash collision on children of $a$ and $b$ can possibly propagate upwards and imply a collision at the top level. Therefore, we will need to use the inductive hypothesis.

More specifically, $h(a) = h(b)$ can arise in two ways: either $h(a_i) = h(b_i)$ for all $i$, which of course implies $h(a) = h(b)$, or the two tuples of arguments to $f$ do not match, and the collision is produced with the two top-level calls to $f$.

For the former case, recall that $a \neq b$, and therefore $a_i \neq b_i$ for some $i$. Combining that with $h(a_i) = h(b_i)$, we can apply the inductive hypothesis; this is legal, since $\max(|a_i|, |b_i|) < \max(|a|, |b|)$. The probability of $h(a_i) = h(b_i)$ is therefore bounded by $\frac{|a_i| + |b_i|}{2^b}$. The case when the top level calls to $f$ produce a collision can be analyzed as before, yielding probability of collision equal to $\frac{1}{2^b}$.

Summing up the probabilities from the two subcases, we get

\begin{equation*}
    p(h(a) = h(b)) \leq \frac{|a_i| + |b_i| + 1}{2^b} < \frac{|a| + |b|}{2^b}
\end{equation*}

\end{proof}

\section{Empirical Frequency of Hash Collisions}\label{appendix:collisions}

To experimentally verify the bound from Theorem~\ref{theorem:strong-hash-pair}, in this section we evaluate the empirical frequency of hash collisions. To do this, we first modified our algorithm to use 16-bit integers, as for 32-bit and above one needs an enormous number of trials to find a collision.

However, measuring the amount of collisions in a meaningful way is non-trivial: while we can check how often hashes of two \emph{random} expressions collide, it might be that more collisions arise in real applications; what is worse, there may be adversarial pairs of expressions specially crafted to make our hashes collide. Of course, if the hash function is fixed, there \emph{exist} pairs of expressions that produce a hash collision. However, note that Theorem~\ref{theorem:strong-hash-pair} assumes our hash function is \emph{not} fixed: instead, the hash combiners that are used should be chosen randomly. In practical terms, our theorem states the following: if we instantiate our algorithm, and seed its hash combiners with a randomly chosen seed, then \emph{there is no way to consistently break it} - while for a \emph{fixed} seed one can laboriously find a collision, there is no pair of expressions that would collide reliably across many seeds. This is a much stronger claim than just stating that two random expressions rarely collide, and to verify it, one needs to play a malicious user, and try to construct adversarial examples that are more likely to collide than random ones.

Therefore, in this section we consider two ways to create a pair of expressions:

\begin{itemize}
\item \emph{Random expressions}.  Here we generate two random balanced expressions as in Section~\secref{eval-synthetic}, and discard pairs that turn out to be alpha-equivalent.
\item \emph{Adversarial expressions}.
As discussed above, we acted as an adversary, and designed a way to generate pairs of expressions that are more likely to produce collisions. We describe this in detail in Appendix~\secref{adversarial-exprs}.
\end{itemize}

For both ways of generating pairs of expressions we varied the expression size between $128$ and $4096$, for each size drawing $10 \cdot 2^{16}$ pairs and hashing them looking for collisions. We then divided the resulting number of collisions by $10$ to get an estimated number of collisions per $2^{16}$ samples. Note that the resulting value for a perfect hash function would be $1$ (in expectation). On the other hand, Theorem~\ref{theorem:strong-hash-pair} upper-bounds the collision probability by $5 \frac{|e_1|+|e_2|}{2^b}$, and since we are comparing pairs of expressions of the same size $|e_1| = |e_2| = n$, we get an upper-bound of $10n$ collisions per $2^{16}$ samples.

\begin{figure}
\centering
\includegraphics[width=\linewidth]{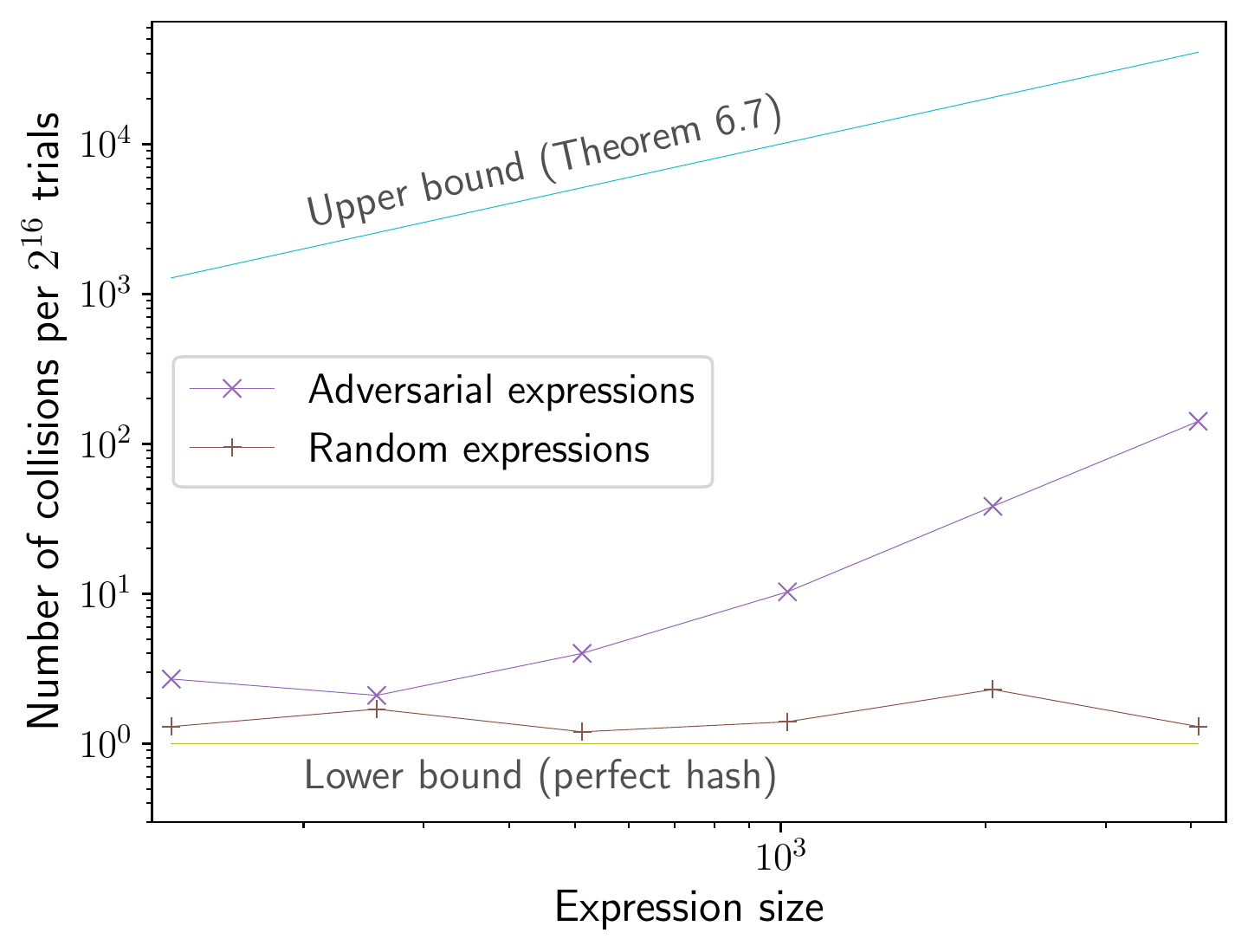}
\caption{Empirical number of collisions for both random and adversarial pairs of expressions with varying size.}
\label{fig:collisions}
\end{figure}

In Figure~\ref{fig:collisions}, we plot the resulting number of collisions for the two ways of generating expression pairs, as well as a lower-bound (perfect hash function) and upper-bound (Theorem~\ref{theorem:strong-hash-pair}). For random expressions, our hash achieves a close-to-perfect number of collisions, which does not appear to grow with $n$. On the other hand, adversarial expression pairs generate more collisions as $n$ grows, but still two orders of magnitude less than the theoretical upper-bound. Note that we do not consider $n > 4096$, as for $n = \frac{2^{16}}{10} \approx 6500$ our upper-bound becomes vacuous: $n$ is too close to $2^b$ to provide any guarantees on the frequency of collisions.

\subsection{Generating Adversarial Expression Pairs}\label{sec:adversarial-exprs}

Here we describe the procedure to generate adversarial pairs of expressions. This process is not specialized to our specific algorithm, and hence may work for other compositional hashing algorithms that act on tree-like objects.

The idea is as follows: we start with two small non-alpha-equivalent expressions with no free variables. Concretely, we choose
\begin{code}
$e_1$ = \x . App (Var x) (App (Var x) (Var x))
$e_2$ = \x . App (App (Var x) (Var x)) (Var x)
\end{code}
Then, until the right expression size is reached, we transform the expressions by wrapping both of them in either a \lstinline{Lam} or an \lstinline{App} node. In other words, we create a pair of highly unbalanced expressions (similarly to Section~\secref{eval-synthetic}) which differ only at the very bottom. Intuitively, when hashing the resulting expressions, the (likely different) hashes of $e_1$ and $e_2$ will get repeatedly transformed \emph{in the same way} when the algorithm computes the hashes for larger subtrees. Crucially, if the hashes collide at some point, they will stay the same indefinitely, as the way $e_1$ and $e_2$ are extended upwards is the same. Hence, a collision at the lower level will propagate to cause a collision at the top level, causing the collision probability to grow with expression size.

\section{An Alternative to StructureTag}
\label{appendix:number-theory}

Our initial algorithm of Section~\secref{full-algo-quadratic} transforms \lstinline{PosTree}s from both subtrees of an \lstinline{App} node. Since this is prohibitively expensive, in Section~\secref{smaller-tree} we show that it is enough to transform only \emph{one}
 of the subtrees, as long as we introduce an appropriately chosen \lstinline{StructureTag}. In this section, we discuss an alternative to the \lstinline{StructureTag} approach, which yields an algorithm with the same final time complexity.
 
Since this alternative is more complex, we refrain from mentioning it in the main body of the paper. Here we describe the key ideas involved, which the reader may find interesting.

To arrive at the alternative formulation, consider the following question: can we transform \lstinline{PosTree}s from \emph{both} children, and still get good time complexity due to laziness? In other words, can we avoid actually tagging \lstinline{PosTree}s in a given map, and instead lazily store the transformation to be applied?

Formally, consider a variable map after the optimizations of Section~\secref{hash-structure}, i.e. with \lstinline{PosTree}s simply being represented by hash-codes (elements of $\HH$). Extending a \lstinline{PosTree} with one of the markers (\lstinline{PTLeftOnly}, \lstinline{PTRightOnly} and \lstinline{PTBoth}) corresponds to transforming the hash-code in a fixed way (although dependent on the particular hash combiner); denote these transformations as $f_L, f_R : \HH \rightarrow \HH$ and $f_{both} : \HH^2 \rightarrow \HH$. To simplify the following analysis, we will now focus on $f_L$ and $f_R$; extending it to handle $f_{both}$ is easy, as in a given node it only needs to be called at most as many times as the size of the \emph{smaller} of the children's variable maps.

If we focus on the set of values in the variable map, the problem is the following: we want to maintain a set of hash-codes, with the possibility of quickly applying either $f_L$ or $f_R$ on all the values. This hints at a lazy solution: maintain a transformation $f : \HH \rightarrow \HH$ together with the set of hash-codes, with the meaning that $f$ should be applied on all elements of the set. Applying $f_L$ on all elements of such a lazy-transformation-augmented set is just a matter of setting $f^\prime = f_L \circ f$.

However, this is not so easy: when we look up an entry in the variable map, we need to pass the obtained value through the lazy transformation $f$. Therefore, $f$ has to be represented in a way such that it is possible to evaluate it in constant time, even though $f$ may have been created out of a very long sequence of function compositions. Moreover, adding a new entry to the variable map requires passing the newly added value $x$ through $f^{-1}$, so that when the value is read out later and passed through $f$ we recover $f(f^{-1}(x)) = x$.

It remains to show an efficient representation of functions $f : \HH \rightarrow \HH$, such that composing, evaluating, and inverting takes constant time.

To simplify this, we can notice that fast inversion is not strictly necessary if we always manipulate pairs of a function and its inverse i.e. $(f, f^{-1})$; note that composing $(f, f^{-1})$ with $(g, g^{-1})$ can be computed as $(f \circ g, g^{-1} \circ f^{-1})$. We still need to be able to invert our fundamental building blocks ($f_L$ and $f_R$), but since that happens only once before the algorithm commences, it does not have to be done in constant time.

While many possible representations that satisfy the aforementioned requirements exist, one natural choice are linear functions, i.e. functions of the form $f(x) = a \cdot x + b$, where $a, b \in \HH$, and all operations are carried out modulo $|\HH|$. Indeed, a linear $f$ can be represented with just a pair of $(a, b)$, evaluating it on $x \in \HH$ takes constant time, and a composition of two linear functions represented by $(a_f, b_f)$ and $(a_g, b_g)$ is $(a_f \cdot a_g, a_f \cdot b_g + b_f)$. Invertibility can be guaranteed by requiring that $a$ is coprime with $|\HH|$; for the case of $\HH = \{0,1\}^b$, this simply means that $a$ is odd.

Using linear transformations on hash-codes poses a potential risk, as it could lead to collisions, especially if the choice of $f_L$ and $f_R$ is particularly unfortunate; because of that, using a \lstinline{StructureTag}-based variant is preferable. However, we have also implemented the variant described in this section, and found that it in practice it also produces strong hashes. We believe that, with careful analysis, one could likely derive guarantees similar to the one of Theorem~\ref{theorem:strong-hash-pair}.

\end{document}